\documentclass[12pt,reqno]{amsart}
\usepackage{amsmath}
\usepackage{latexsym}
\usepackage{amsfonts}
\usepackage{amssymb}
\usepackage{color}
\usepackage{bbm,dsfont}
\usepackage{graphicx}
\usepackage{hyperref}
\usepackage{enumerate}
\usepackage[foot]{amsaddr}


\newtheorem{proposition}{Proposition}
\newtheorem{theorem}{Theorem}

\newtheorem{corollary}{Corollary}

\theoremstyle{definition}

\newtheorem{example}{Example}
\newtheorem{definition}{Definition}




\newcommand{\real}{\mathbb R} 
\newcommand{\half}{\tfrac{1}{2}} 

\newcommand{\hi}{\mathcal{H}} 
\newcommand{\lh}{\mathcal{L(H)}} 
\newcommand{\kb}[2]{|#1\rangle\langle#2|} 
\newcommand{\no}[1]{\left\|#1\right\|} 
\newcommand{\id}{\mathbbm{1}} 


\newcommand{\va}{\mathbf{a}} 
\newcommand{\vb}{\mathbf{b}} 
\newcommand{\vg}{\mathbf{g}} 
\newcommand{\vx}{\mathbf{x}} 
\newcommand{\vy}{\mathbf{y}} 
\newcommand{\vsigma}{\boldsymbol{\sigma}} 



\newcommand{\joint}{\mathcal{J}} 
\newcommand{\minjoint}{\mathcal{J}_{\mathrm{min}}} 
\newcommand{\maxjoint}{\mathcal{J}_{\mathrm{max}}} 

\newcommand{\Ao}{\mathsf{A}}
\newcommand{\Bo}{\mathsf{B}}
\newcommand{\Co}{\mathsf{C}}
\newcommand{\Fo}{\mathsf{F}}
\newcommand{\Go}{\mathsf{G}}

\newcommand{\Ea}{\mathsf{E}^{1,\va}} 
\newcommand{\Eb}{\mathsf{E}^{1,\vb}} 
\newcommand{\Eaa}{\mathsf{E}^{\alpha,\mathbf{a}}} 
\newcommand{\Ebb}{\mathsf{E}^{\beta,\mathbf{b}}} 
\newcommand{\Gg}{\mathsf{G}^{\gamma,\mathbf{g}}} 

\newcommand{\pp}{\prec_{\mathrm{post}}} 
\newcommand{\ppeq}{\preceq_{\mathrm{post}}} 
\newcommand{\ppsim}{\sim_{\mathrm{post}}} 
\newcommand{\markov}[2]{\mathbf{Markov}(#1 , #2)} 

\newcommand{\defarrow}{\stackrel{\mathrm{def.}}{\Leftrightarrow}}


\begin{document}

\title[]{Post-processing Minimal Joint Observables}
\author{Teiko Heinosaari$^\Diamond$}
\address{$\Diamond$ QTF Centre of Excellence, Department of Physics and Astronomy, University of Turku, Turku 20014, Finland}

\author{Yui Kuramochi$^\Box$}
\address{$\Box$ School of Physics and Astronomy, Sun Yat-Sen University, Zhuhai Campus, Zhuhai 519082, China}


\begin{abstract}
A finite set of quantum observables (positive operator valued measures) is called compatible if these observables are marginals of a some observable, called a joint observable of them.
For a given set of compatible observables, their joint observable is in general not unique and it is desirable to take a minimal joint observable in the post-processing order since a less informative observable disturbs less the system.
We address the question of the minimality of finite-outcome joint observables and prove that any joint observable is lower bounded by a minimal joint observable in the post-processing order.
We also give characterizations of the minimality of a joint observable that can be checked by finite-step algorithms and apply them to the case of non-commuting dichotomic qubit observables.
\end{abstract}

\maketitle

\section{Introduction}

An important and intriguing feature of quantum observables is that two observables may not allow a simultaneous measurement, or any other kind of joint implementation. 
This relation is called incompatibility, and it links interestingly to various other features of quantum theory \cite{HeMiZi16}.

Mathematically speaking, quantum observables are described as positive operator valued measures (POVMs). 
Two observables $\Ao$ and $\Bo$ are compatible if there exists a third observable $\Co$ that gives both of them as marginals. 
In this case, $\Co$ is called their joint observable. 
The set of all joint observables of a compatible pair of observables is convex. 
Therefore, a compatible pair has either a unique joint observable or infinitely many different joint observables. 
If two observables are compatible and at least one of them is sharp, then their joint observable is unique \cite{HeReSt08}.
Other criteria leading to the existence of a unique joint observable have been studied in \cite{HaHePe14,GuCu18}.
These are, in any case, very special situations and in the generic case there is no unique joint observable. 
This raises the question on the possible, physically motivated, hiearchy in the set of all joint observables. 
As a practical problem, one may ask if some joint observables are more preferred than others.  

A possible starting point, adopted in the current investigation, is that one should choose a joint observable that allows least disturbing measurement among all joint observables. 
As explained in Section~\ref{sec:motivation}, this means that we are interested in the post-processing ordering of the set joint observables, and seek minimal elements in that set. 
Our main result is that any joint observable is lower bounded by a minimal joint observable 
(Section~\ref{sec:min}, Theorem~\ref{theo:min} and Corollary~\ref{coro:complete}).
We also give characterizations of minimal joint observables that can be checked by finite-step algorithms for a given joint observable (Section~\ref{sec:characterization}).
We apply these characterizations to two non-commutative dichotomic qubit observables
and obtain a complete characterization of minimality of joint observables in this case
(Theorem~\ref{theo:qubitmin}).

\section{Motivation of the question}\label{sec:motivation}

\subsection{Post-processing minimal observables}

In this work we are going to deal with observables with finite number of outcomes.
Let $\hi$ be a complex Hilbert space. 
We denote by $\lh$ the set of all bounded operators on $\hi$.
An observable is a map $\Ao:\Omega\to\lh$ such that $\Ao(x)\geq 0$ for every $x\in\Omega$ and $\sum_x \Ao(x)=\id$, where $\Omega$ is a finite set of measurement outcomes. 

As an introductory example, suppose our aim is to discriminate two orthogonal pure states (i.e. one-dimensional projections) $P_1=\kb{\psi_1}{\psi_1}$ and $P_2=\kb{\psi_2}{\psi_2}$ of a, say, $4$-dimensional quantum system.
One possibility is to complete the set $\{\psi_1,\psi_2\}$ into an orthonormal basis $\{\psi_1,\psi_2,\psi_3,\psi_4\}$ and then make a measurement in that basis.
The corresponding observable is hence $\Ao(x)=\kb{\psi_x}{\psi_x}$, $x=1,\ldots,4$. 
 This kind of measurement, however, disturbs the system more than is necessary.
Instead of measuring $\Ao$, we can, for instance, perform a measurement of a two-outcome observable $\Bo$, defined as $\Bo(1)=\kb{\psi_1}{\psi_1} + \kb{\psi_3}{\psi_3}$, $\Bo(2)=\kb{\psi_2}{\psi_2} + \kb{\psi_4}{\psi_4}$.
The observable $\Bo$ discriminates the states $P_1$ and $P_2$, but allows a less disturbing measurement than $\Ao$.
In this exemplary case, we can get $\Bo$ from $\Ao$ by grouping the outcomes.  
Namely, we have
\begin{equation*}
\Bo(1) = \Ao(1)+\Ao(3) \, , \quad \Bo(2) = \Ao(2) + \Ao(4) \, .
\end{equation*}
This sort of relabeling of outcomes is a special type of post-processing.

To recall the general definition of post-processing, let $\Omega_1$ and $\Omega_2$ be finite sets.
A map $p \colon \Omega_1 \times \Omega_2 \to \real$ is called 
a \emph{Markov kernel}, or Markov matrix, from $\Omega_2$ to $\Omega_1$
if for each $x \in \Omega_1 $ and each $y \in \Omega_2$,
$p(x,y) \geq 0$
and 
$\sum_{x^\prime \in \Omega_1} p(x^\prime , y) =1.$
The set of Markov kernels from $\Omega_2$ to $\Omega_1$ is 
written as $\markov{\Omega_1}{\Omega_2}$, and it is a compact convex subset of 
the Euclidean space $\real^{\Omega_1 \times \Omega_2}$.
Let $\Ao \colon \Omega_1 \to \lh$ be an observable, let $\Omega_2$ be a finite set,
and let $p \in \markov{\Omega_2}{\Omega_1}$.
We define an observable $p \ast \Ao \colon \Omega_2 \to \lh$
by
\begin{equation*}
p\ast \Ao (x) := \sum_{y \in \Omega_1} p(x,y) \Ao (y) \, .
\end{equation*}
If an observable $\Bo$, with an outcome set $\Omega_2$, can be written as $p \ast \Ao $
for some $p \in \markov{\Omega_2}{\Omega_1}$, then $\Bo$ is called a \emph{post-processing} of $\Ao$, and 
written as $\Bo \ppeq \Ao$.
This relation has been studied e.g. in \cite{MaMu90a,BuDaKePeWe05,Heinonen05}.

The relation $\ppeq$ is reflexive and transitive, hence a preorder.
A preorder induces an equivalence relation $\ppsim$ and 
a partial order in the set of equivalence classes. 
We write $\Ao \pp \Bo$ if $\Ao \ppeq \Bo$ holds but
$\Bo \ppeq \Ao$ does not hold.
We will say that an observable $\Ao$ is maximal/minimal/greatest/least in a subset $X$ if the corresponding equivalence class $[\Ao]$ has that property in the respective subset of equivalence classes.  
For instance, we say that an observable $\Ao$ is \emph{post-processing minimal  (in $X$)} if the following implication holds for every observable $\Bo$ $(\in X)$:
\begin{equation*}
 \Bo \ppeq \Ao \quad \Rightarrow \quad \Ao \ppeq \Bo \, .
\end{equation*}
It is known that an observable $\Ao$ is post-processing minimal in the set of all observables if and only if each of its operator is a multiple of the identity operator $\id$, while $\Ao$ is post-processing maximal in the set of all observables if and only if each of its nonzero operator is rank-1 \cite{MaMu90a}.

It has been shown in \cite{HeMi13} that for two observables $\Ao$ and $\Bo$, the relation $\Bo \ppeq \Ao$ holds if and only if the disturbance related to $\Bo$ is smaller than or equal to the disturbance related to $\Ao$.
The disturbance related to an observable $\Ao$ refers to the set of all quantum channels that arise in some measurement of $\Ao$.
As we may want to perform subsequent measurements, it is desirable to disturb the initial state as little as possible. 
We thereby take the following as a guiding principle:
\begin{quote}
\emph{Whenever we need to choose an observable that has certain property (e.g. enables discrimination of some states), we should choose it to be minimal in the post-processing preorder in the set of all observables with that property.}
\end{quote}

In a typical case, this guiding principle does not lead to a unique choice.
For instance, in the starting example we can also choose $\Bo'(1)=\kb{\psi_1}{\psi_1} + \kb{\psi_4}{\psi_4}$, $\Bo'(2)=\kb{\psi_2}{\psi_2} + \kb{\psi_3}{\psi_3}$, or any convex combination of $\Bo$ and $\Bo'$.
In this specific example any two-outcome discriminating observable is post-processing minimal.
To see this, we first note that a two-outcome observable $\Co$ discriminates $\psi_1$ and $\psi_2$
if and only if $\Co(i) \geq P_i ,$
$i=1,2 .$
Let $\Co$ and $\Co'$ be two-outcome observables that discriminate 
$\psi_1$ and $\psi_2 $
and suppose that $\Co \ppeq \Co'$.
We take a Markov kernel $p$ such that $\Co = p \ast \Co'$.
Then 
\begin{gather*}
	P_1 = P_1 \Co(1) =  \sum_{i=1,2} p(1,i) P_1 \Co'(i) = p(1,1) P_1 
	\\
	P_2 = P_2 \Co(2) =  \sum_{i=1,2} p(2,i) P_2 \Co'(i) = p(2,2) P_2
\end{gather*}
Therefore $p(1,1)=p(2,2)=1$ and hence $p(1,2)=p(2,1)=0$.
Thus $\Co = \Co'$.
This implies the post-processing minimality of any two-outcome discriminating observable.

\subsection{Minimal joint observables}

In our investigation we consider a situation where there are several tasks that we want to perform. 
We assume that observables $\Ao_1,\ldots,\Ao_n$ for the individual tasks have already been chosen and that they are compatible.
We recall that observables $\Ao_1,\ldots,\Ao_n$ are, by definition, compatible if there exists an observable $\Co$ such that $\Ao_\ell \ppeq \Co$ for all $\ell=1,\ldots,n$.
Clearly, measuring $\Co$ is then enough to simulate measurements of all $\Ao_1,\ldots,\Ao_n$.

According to our guiding principle, we want to choose a post-processing minimal observable among all observables $\Co$ that satisfy $\Ao_\ell \ppeq \Co$ for all $\ell=1,\ldots,n$. 
Our first observation, based on \cite{AlCaHeTo09}, is that if such $\Co$ exists, then there also exists an observable $\Go$ with the outcome set $\Omega_1 \times \cdots \times \Omega_n$ such that $\Go \ppeq \Co$ and each $\Ao_\ell$ is the $\ell$th marginal of $\Go$, i.e., 
\begin{align*}
& \sum_{x_2,\ldots,x_n} \Go(x_1,x_2,\ldots,x_n) = \Ao_1(x_1) \\
& \qquad\qquad \vdots \\
& \sum_{x_1,\ldots,x_{n-1}} \Go(x_1,x_2,\ldots,x_n) = \Ao_n(x_n) \\
\end{align*}
This kind of observable is called a \emph{joint observable} of $\Ao_1,\ldots,\Ao_n$ \cite{LaPu97}.

To verify the previous claims, we first recall that $\Ao_\ell \ppeq \Co$ means that there exists a Markov kernel $p_\ell$ such that
\begin{equation*}
\Ao_\ell(x) = \sum_{y} p_\ell(x,y) \Co(y) \, .
\end{equation*}
We define
\begin{equation}
p((x_1,\ldots,x_n),y) := \prod_{\ell=1}^n p_\ell (x_\ell,y)
\end{equation}
and set
\begin{equation}\label{eq:GfromC}
\Go(x_1,\ldots,x_n) := \sum_{y} p((x_1,\ldots,x_n),y) \Co(y) \, .
\end{equation}
It is straightforward to verify that $\Go$ is a joint observable of $\Ao_1,\ldots,\Ao_n$.
Further,  $p$ is a Markov kernel and, thus, \eqref{eq:GfromC} means that $\Go \ppeq \Co$.

We conclude that when we search for minimal observables $\Co$ satisfying $\Ao_\ell \ppeq \Co$ for all $\ell=1,\ldots,n$, we can limit our search for joint observables of $\Ao_1,\ldots,\Ao_n$.
Namely, any $\Co$ with the required property is either post-processing equivalent to some joint observable, or strictly greater than some joint observable. 

For a finite set of observables $\{\Ao_\ell\}_{\ell =1}^n$, we denote by $\joint(\{\Ao_\ell\}_{\ell =1}^n)$ the set of all their joint observables. 
The following concept will be our main focus.

\begin{definition}\label{def:min}
Let $\{\Ao_\ell\}_{\ell =1}^n$ be a set of compatible observables.
Their joint observable $\Go$ is a \emph{minimal joint observable} if $\Go$ is post-processing minimal in $\joint(\{\Ao_\ell\}_{\ell =1}^n)$.
The set of minimal observables of $\{\Ao_\ell\}_{\ell =1}^n$ is denoted by
$\minjoint(\{\Ao_\ell\}_{\ell =1}^n)$.
\end{definition}

Since the post-processing relation is a preorder rather than a partial order, it is often convenient to work with the equivalence classes of observables, and in that case the induced post-processing relation is a partial order. 
We denote by $\minjoint ( \{\Ao_\ell \}_{\ell =1}^n)/\ppsim$ the partially ordered set of equivalence classes of minimal joint observables.

\section{Order structure of the set of joint observables}\label{sec:min}

In this section $\{\Ao_\ell\}_{\ell =1}^n$ is a fixed set of compatible observables.
The condition of minimality for a joint observable is 
independent of the choice of the post-processing representatives
of the marginal observables as shown in the following proposition.
\begin{proposition}
\label{prop:min_independence}
Let $\{\Ao_\ell\}_{\ell =1}^n$ be a set of compatible observables
and let $\{\Ao_\ell^\prime \}_{\ell =1}^n$ be observables 
such that $\Ao_\ell \ppsim \Ao_\ell^\prime$
for all $\ell .$
Then there exists a bijection
\[
f \colon \minjoint ( \{\Ao_\ell \}_{\ell =1}^n)/\ppsim 
\to 
\minjoint (\{\Ao_\ell^\prime \}_{\ell =1}^n)/\ppsim
\]
such that 
$[\Go] \ppsim  f([ \Go  ])   $
for all $\Go \in \minjoint (\{\Ao_\ell \}_{\ell =1}^n) .$
\end{proposition}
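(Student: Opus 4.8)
The plan is to construct an explicit map on joint observables induced by post-processings between the marginals, and to show that it restricts to the desired bijection on equivalence classes of minimal joint observables. First I would fix, for each $\ell$, Markov kernels witnessing the equivalence $\Ao_\ell \ppsim \Ao_\ell'$: say $\mu_\ell \in \markov{\Omega_\ell'}{\Omega_\ell}$ and $\nu_\ell \in \markov{\Omega_\ell}{\Omega_\ell'}$ with $\Ao_\ell' = \mu_\ell \ast \Ao_\ell$ and $\Ao_\ell = \nu_\ell \ast \Ao_\ell'$, where $\Omega_\ell,\Omega_\ell'$ are the outcome sets of $\Ao_\ell,\Ao_\ell'$. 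From these I form the product Markov kernel $\mu$ on the product outcome sets, $\mu((y_1,\ldots,y_n),(x_1,\ldots,x_n)) := \prod_{\ell=1}^n \mu_\ell(y_\ell,x_\ell)$, whose normalization $\sum_{(y_1,\ldots,y_n)} \mu(\cdot) = \prod_\ell \sum_{y_\ell}\mu_\ell(y_\ell,x_\ell) = 1$ shows it is again a Markov kernel, and analogously the product kernel $\nu$. Given a joint observable $\Go$ of $\{\Ao_\ell\}_{\ell=1}^n$ I then set $\Go' := \mu \ast \Go$, the candidate value of $f$ at $[\Go]$.

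The first thing to check is that $\Go'$ is genuinely a joint observable of $\{\Ao_\ell'\}_{\ell=1}^n$. Computing the $\ell$th marginal of $\Go'$ and using that for each $k \neq \ell$ the inner sum $\sum_{y_k}\mu_k(y_k,x_k)$ equals $1$, the marginal collapses to $\sum_{x_\ell}\mu_\ell(y_\ell,x_\ell)\Ao_\ell(x_\ell) = \Ao_\ell'(y_\ell)$, as desired; this is a routine computation. By construction $\Go' \ppeq \Go$. The main obstacle is to upgrade this one-sided relation to an equivalence and to transfer minimality, and this is exactly where the minimality of $\Go$ is essential. Here I would use a round trip: set $\Go'' := \nu \ast \Go'$, which by the same marginal computation (with the roles of the two families reversed) is again a joint observable of $\{\Ao_\ell\}_{\ell=1}^n$ and satisfies $\Go'' \ppeq \Go' \ppeq \Go$. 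Since $\Go$ is minimal in $\joint(\{\Ao_\ell\}_{\ell=1}^n)$ and $\Go'' \ppeq \Go$ with $\Go'' \in \joint(\{\Ao_\ell\}_{\ell=1}^n)$, minimality forces $\Go \ppeq \Go''$, hence $\Go \ppeq \Go'' \ppeq \Go'$, which together with $\Go' \ppeq \Go$ gives $\Go' \ppsim \Go$.

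Minimality of $\Go'$ in $\joint(\{\Ao_\ell'\}_{\ell=1}^n)$ follows by the same device: if $\Bo \ppeq \Go'$ with $\Bo$ a joint observable of $\{\Ao_\ell'\}_{\ell=1}^n$, then $\nu \ast \Bo$ is a joint observable of $\{\Ao_\ell\}_{\ell=1}^n$ with $\nu \ast \Bo \ppeq \Bo \ppeq \Go' \ppsim \Go$, so minimality of $\Go$ yields $\Go \ppeq \nu \ast \Bo \ppeq \Bo$, and thus $\Go' \ppsim \Go \ppeq \Bo$; combined with $\Bo \ppeq \Go'$ this gives $\Bo \ppsim \Go'$. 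Hence $\Go' \in \minjoint(\{\Ao_\ell'\}_{\ell=1}^n)$, and I may set $f([\Go]) := [\Go']$.

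Because $\Go' \ppsim \Go$ always holds, $f$ is well defined on equivalence classes: any two outputs are $\ppsim$-equivalent to $\Go$, hence to each other, independently of the chosen kernels $\mu_\ell$, and moreover $[\Go] \ppsim f([\Go])$ by construction. Finally, performing the symmetric construction with the kernels $\nu_\ell$ yields a map $g$ in the opposite direction with $[\Co] \ppsim g([\Co])$ for every $\Co \in \minjoint(\{\Ao_\ell'\}_{\ell=1}^n)$. Since $g(f([\Go]))$ is $\ppsim$-equivalent to $\Go'$ and $\Go' \ppsim \Go$, we get $g \circ f = \mathrm{id}$, and symmetrically $f \circ g = \mathrm{id}$, so $f$ is the required bijection.
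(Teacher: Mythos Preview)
Your proof is correct and follows essentially the same strategy as the paper's: both construct $\Go'$ from $\Go$ via the product Markov kernel (the paper invokes this implicitly through the observation recorded earlier in Section~\ref{sec:motivation}), then use a round-trip back to $\joint(\{\Ao_\ell\}_{\ell=1}^n)$ together with the minimality of $\Go$ to force $\Go' \ppsim \Go$ and minimality of $\Go'$, and finally appeal to symmetry for the inverse. The only cosmetic difference is that the paper handles the equivalence $\Go' \ppsim \Go$ and the minimality of $\Go'$ in a single stroke (by starting from an arbitrary $\Go_0' \ppeq \Go'$), whereas you treat these in two separate steps.
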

\begin{proof}
Let $\Go \in \minjoint (\{\Ao_\ell \}_{\ell =1}^n) .$
Since $\Ao_\ell^\prime \ppsim \Ao_\ell \ppeq \Go ,$
there exists a joint observable $\Go^\prime \in \joint ( \{\Ao_\ell^\prime \}_{\ell =1}^n)$
such that $\Go^\prime \ppeq \Go .$
Take an arbitrary $\Go_0^\prime \in  \joint ( \{\Ao_\ell^\prime \}_{\ell =1}^n)$
satisfying $\Go_0^\prime \ppeq  \Go^\prime .$
Since $\Ao_\ell \ppeq \Go_0^\prime$ for all $\ell ,$
there exists a joint observable 
$\Go_0 \in \joint ( \{\Ao_\ell \}_{\ell =1}^n)$ such that
$\Go_0 \ppeq \Go_0^\prime .$
Then we have $\Go_0 \ppeq \Go_0^\prime \ppeq \Go^\prime \ppeq \Go $
and the minimality of $\Go$ implies that
all of the joint observables $\Go_0 ,$ $ \Go_0^\prime$ and $\Go^\prime$
are post-processing equivalent to $\Go .$
This shows that $\Go^\prime$ is minimal and post-processing equivalent to $\Go .$
Hence there exists a mapping
$f \colon \minjoint ( \{\Ao_\ell \}_{\ell =1}^n)/\ppsim \to \minjoint (\{\Ao_\ell^\prime \}_{\ell =1}^n)/\ppsim$
such that 
$[\Go] \ppsim  f([ \Go  ])   $
for all $\Go \in \minjoint (\{\Ao_\ell \}_{\ell =1}^n) .$
$f$ is apparently injective.
By interchanging $\{\Ao_\ell \}_{\ell =1}^n$ with 
$\{\Ao_\ell^\prime \}_{\ell =1}^n $ in the above discussion,
we can also conclude the existence of an injection 
$
g \colon \minjoint ( \{\Ao_\ell^\prime \}_{\ell =1}^n)/\ppsim 
\to 
\minjoint (\{\Ao_\ell \}_{\ell =1}^n)/\ppsim
$
such that 
$[\Go^\prime] \ppsim  g([ \Go^\prime  ])   $
for all $\Go^\prime \in \minjoint (\{\Ao_\ell^\prime \}_{\ell =1}^n) .$
Then $g $ is the inverse map of $f$ and hence $f$ is a bijection.
\end{proof}

The following theorem states that any post-processing monotone net 
of joint observables has its supremum or infinimum that is also a joint observable.
\begin{theorem}
\label{theo:min}
Let $\{\Ao_\ell\}_{\ell =1}^n$ be a finite set of compatible observables
and let $([\Go_\lambda])_{\lambda \in \Lambda}$ be a 
net in $\joint (\{\Ao_\ell\}_{\ell =1}^n) / \ppsim .$
\begin{enumerate}[1.]
\item
If $([\Go_\lambda])_{\lambda \in \Lambda}$ is monotonically increasing,
then there exists a joint observable $\widetilde{\Go} \in \joint (\{\Ao_\ell\}_{\ell =1}^n)$
such that $[\widetilde{\Go}]$ is an upper bound of the net 
$([\Go_\lambda])_{\lambda \in \Lambda}$
and for any observable $\Bo ,$
$\Go_\lambda \ppeq \Bo$ $(\forall \lambda \in \Lambda)$
implies 
$\widetilde{\Go} \ppeq \Bo  .$
\item
If $([\Go_\lambda])_{\lambda \in \Lambda}$ is monotonically decreasing,
then there exists a joint observable $\widetilde{\Go} \in \joint (\{\Ao_\ell\}_{\ell =1}^n)$
such that $[\widetilde{\Go}]$ is a lower bound of the net 
$([\Go_\lambda])_{\lambda \in \Lambda}$
and for any observable $\Bo ,$
$\Bo \ppeq \Go_\lambda $ $(\forall \lambda \in \Lambda)$
implies 
$\Bo \ppeq \widetilde{\Go}  .$
\end{enumerate}
\end{theorem}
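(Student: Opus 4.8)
The plan is to exploit two compactness facts together with the weak-$*$ continuity of the post-processing operation, and to obtain the desired joint observable as a limit point of the net along a suitable subnet. Throughout I regard $\lh$ as the dual of the trace class $\trh$ and equip it with the weak-$*$ topology; on the order interval $\{T : \nul \le T \le \id\}$ this topology is compact by the Banach--Alaoglu theorem. The crucial structural observation is that every element of $\joint(\{\Ao_\ell\}_{\ell=1}^n)$ has one and the same finite outcome set $\Omega := \Omega_1 \times \cdots \times \Omega_n$, so $\joint(\{\Ao_\ell\}_{\ell=1}^n)$ sits inside the compact product $\prod_{x \in \Omega}\{T : \nul \le T \le \id\}$. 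Since positivity of each operator and the (finitely many, weak-$*$ continuous) marginal constraints are preserved under weak-$*$ limits, $\joint(\{\Ao_\ell\}_{\ell=1}^n)$ is weak-$*$ closed, hence weak-$*$ compact; note that this uses only finiteness of the outcome set, not of $\hi$. I will also use that for any finite sets the Markov set $\markov{\Omega}{\Omega'}$ is a compact convex subset of a Euclidean space, and that the bilinear map $(p,\Go)\mapsto p \ast \Go$ is jointly continuous for the Euclidean topology on the kernels and the weak-$*$ topology on the observables (one checks this by testing against a fixed trace-class operator, the sum over $\Omega$ being finite).

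I treat the monotonically increasing case (item~1); item~2 is entirely symmetric and is obtained by reversing the order. First I would fix representatives $\Go_\lambda \in \joint(\{\Ao_\ell\}_{\ell=1}^n)$ of the classes $[\Go_\lambda]$; monotonicity of the net of classes means exactly that $\Go_\lambda \ppeq \Go_\mu$ whenever $\lambda \le \mu$. By weak-$*$ compactness of $\joint(\{\Ao_\ell\}_{\ell=1}^n)$ the net $(\Go_\lambda)$ has a convergent subnet $(\Go_{\lambda_\kappa})_\kappa$ with limit $\widetilde{\Go} \in \joint(\{\Ao_\ell\}_{\ell=1}^n)$; this $\widetilde{\Go}$ is my candidate. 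The heart of the argument is to show that $[\widetilde{\Go}]$ is an upper bound, i.e. $\Go_\lambda \ppeq \widetilde{\Go}$ for every fixed $\lambda$. Because the subnet is cofinal, $\lambda_\kappa \ge \lambda$ eventually, so for such $\kappa$ there is a kernel $p_{\lambda,\kappa} \in \markov{\Omega}{\Omega}$ with $\Go_\lambda = p_{\lambda,\kappa} \ast \Go_{\lambda_\kappa}$. Compactness of $\markov{\Omega}{\Omega}$ lets me pass to a further subnet along which $p_{\lambda,\kappa} \to p_\lambda$ while still $\Go_{\lambda_\kappa} \to \widetilde{\Go}$; joint continuity of $\ast$ then yields $\Go_\lambda = p_\lambda \ast \widetilde{\Go}$, that is $\Go_\lambda \ppeq \widetilde{\Go}$.

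It remains to verify the stated universal (least-upper-bound) property. Let $\Bo$ be any finite-outcome observable, with outcome set $\Omega'$, satisfying $\Go_\lambda \ppeq \Bo$ for all $\lambda$, and pick kernels $q_\lambda \in \markov{\Omega}{\Omega'}$ with $\Go_\lambda = q_\lambda \ast \Bo$. Restricting to the convergent subnet and using compactness of $\markov{\Omega}{\Omega'}$, I extract a sub-subnet with $q_{\lambda_\kappa} \to q$ and $\Go_{\lambda_\kappa} \to \widetilde{\Go}$; joint continuity of $\ast$ gives $\widetilde{\Go} = q \ast \Bo$, hence $\widetilde{\Go} \ppeq \Bo$, as required. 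For item~2 one runs the same scheme: the subnet limit $\widetilde{\Go}$ is again a joint observable, the monotone decrease produces kernels in the opposite direction ($\Go_{\lambda_\kappa} = p_{\lambda,\kappa}\ast\Go_\lambda$, giving $\widetilde{\Go} \ppeq \Go_\lambda$), and a lower bound $\Bo \ppeq \Go_\lambda$ ($\Bo = r_\lambda \ast \Go_\lambda$) passes to the limit as $\Bo = r \ast \widetilde{\Go}$, i.e. $\Bo \ppeq \widetilde{\Go}$.

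The step I expect to be delicate, and where care is needed, is the passage to the limit that converts the kernels $p_{\lambda,\kappa}$ (or $q_{\lambda_\kappa}$, $r_\lambda$) witnessing post-processing between net members into a single kernel witnessing post-processing with the limit $\widetilde{\Go}$. This requires simultaneously (a) the weak-$*$ compactness and closedness of $\joint(\{\Ao_\ell\}_{\ell=1}^n)$ so that $\widetilde{\Go}$ is itself a joint observable, (b) compactness of the relevant Markov sets so that kernel limits exist, and (c) the joint continuity of $(p,\Go)\mapsto p\ast\Go$ in the product (Euclidean $\times$ weak-$*$) topology so that the defining identities survive the limit; in particular one must keep the fixed index $\lambda$ (resp. the fixed observable $\Bo$) untouched while only the subnet index varies. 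Once these three ingredients are in place the remaining verifications are routine.
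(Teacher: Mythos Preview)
Your proposal is correct and follows essentially the same compactness-plus-continuity strategy as the paper's proof. The only differences are organizational: the paper applies Tychonoff once to the large product $\markov{\tilde\Omega}{\tilde\Omega}^{\Lambda}\times\joint(\{\Ao_\ell\}_{\ell=1}^n)$ to extract a single subnet that works for all $\lambda$ simultaneously (and writes the limit passage as an operator-norm estimate), whereas you first fix a weak-$*$ convergent subnet of the observables and then, for each fixed $\lambda$, pass to a further subnet for the kernels---your explicit use of the weak-$*$ topology is in fact a bit more careful in the infinite-dimensional case.
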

\begin{proof}
Let $\Omega_\ell$ be the outcome set of $\Ao_\ell$
and let $\tilde{\Omega}:= \Omega_1 \times \dots \times \Omega_n .$
\begin{enumerate}[1.]
\item
By assumption, for $\lambda, \lambda^\prime \in \Lambda$
satisfying $\lambda \leq \lambda^\prime$
there exists $p^{\lambda | \lambda^\prime} \in \markov{\tilde{\Omega}}{\tilde{\Omega}}$
such that $\Go_\lambda = p^{\lambda | \lambda^\prime}  \ast \Go_{\lambda^\prime} .$
For each $\lambda , \lambda^\prime \in \Lambda$
we define a Markov kernel
$q^{\lambda | \lambda^\prime} \in \markov{\tilde{\Omega}}{\tilde{\Omega}}$
by
\begin{equation}
	q^{\lambda | \lambda^\prime} (\vx , \vx^\prime)
	:=
	\begin{cases}
		p^{\lambda | \lambda^\prime} (\vx , \vx^\prime)
		& \text{if $\lambda \leq \lambda^\prime ,$} \\
		\delta_{\vx , \vx^\prime}
		& \text{otherwise.}
	\end{cases}
	\notag
\end{equation}
By Tychonoff's theorem, the set 
$\markov{\tilde{\Omega}}{\tilde{\Omega}}^\Lambda \times \joint (\{\Ao_\ell\}_{\ell =1}^n)$
is compact in the product topology.
Hence there exists a subnet $(\Go_{\lambda^\prime (i)})_{i \in I}$
such that
$q^{\lambda | \lambda^\prime (i)}$ converges to some 
$q^\lambda \in \markov{\tilde{\Omega}}{\tilde{\Omega}}$
for each $\lambda \in \Lambda$
and 
$\Go_{\lambda^\prime (i)}$ converges to some 
$\widetilde{\Go} \in \joint (\{\Ao_\ell\}_{\ell =1}^n) .$
For each $\lambda \in \Lambda$ and each
$i \in I$ satisfying $\lambda \leq \lambda^\prime (i) ,$
we have 
$
\Go_\lambda 
= p^{\lambda | \lambda^\prime (i)} \ast \Go_{\lambda^\prime (i)}
= q^{\lambda | \lambda^\prime (i)} \ast \Go_{\lambda^\prime (i)} .
$
Thus 
\begin{align*}
	&\left\| 
	\Go_\lambda (\vx)  
	- \sum_{\vx^\prime \in \tilde{\Omega}}
	q^\lambda (\vx , \vx^\prime) \widetilde{\Go} (\vx^\prime)
	\right\|
	\\
	&=
	\left\| 
	\sum_{\vx^\prime \in \tilde{\Omega}}
	q^{\lambda| \lambda^\prime (i)} (\vx , \vx^\prime) \Go_{\lambda^\prime (i)} (\vx^\prime)
	- \sum_{\vx^\prime \in \tilde{\Omega}}
	q^\lambda (\vx , \vx^\prime) \widetilde{\Go} (\vx^\prime)
	\right\|
	\\
	&\leq
	\sum_{\vx^\prime \in \tilde{\Omega}}
	|
	q^{\lambda| \lambda^\prime (i)} (\vx , \vx^\prime) 
	-
	q^\lambda (\vx , \vx^\prime)
	|
	\left\|
	\Go_{\lambda^\prime (i)} (\vx^\prime)
	\right\|
	+
	\sum_{\vx^\prime \in \tilde{\Omega}}
	q^\lambda (\vx , \vx^\prime)
	\left\|
	\Go_{\lambda^\prime (i)} (\vx^\prime)
	-
	\widetilde{\Go} (\vx^\prime)
	\right\|
	\\
	&\leq
	\sum_{\vx^\prime \in \tilde{\Omega}}
	|
	q^{\lambda| \lambda^\prime (i)} (\vx , \vx^\prime) 
	-
	q^\lambda (\vx , \vx^\prime)
	|
	+
	\sum_{\vx^\prime \in \tilde{\Omega}}
	q^\lambda (\vx , \vx^\prime)
	\left\|
	\Go_{\lambda^\prime (i)} (\vx^\prime)
	-
	\widetilde{\Go} (\vx^\prime)
	\right\|
	\\
	&\to 0,
\end{align*}
where $\| \cdot \|$ denotes the operator norm.
This implies $\Go_\lambda = q^\lambda \ast \widetilde{\Go} \ppeq \widetilde{\Go} .$
Hence $[\widetilde{\Go}]$ is an upper bound of the net $([\Go_\lambda])_{\lambda \in \Lambda} .$

Now we take an observable $(\Bo (y) )_{ y \in \Omega^\prime} $ satisfying
$\Go_\lambda \ppeq \Bo$ for all $\lambda \in \Lambda.$
Then for each $\lambda \in \Lambda $ there exists 
$r^\lambda \in \markov{\tilde{\Omega}}{\Omega^\prime}$
such that $\Go_\lambda = r^\lambda \ast \Bo . $
Since $\markov{\tilde{\Omega}}{\Omega^\prime}$ is compact, 
there exists a subnet $(r^{\lambda^{\prime\prime} (j)})_{j \in J}$
of $(r^{\lambda^{\prime} (i)})_{i \in I}$ such that
$r^{\lambda^{\prime\prime} (j)}$ converges to some
$\tilde{r} \in \markov{\tilde{\Omega}}{\Omega^\prime} .$
Thus
\begin{align*}
	\widetilde{\Go} (\vx)
	&=
	\lim_{j \in J} \Go_{\lambda^{\prime\prime} (j)} (\vx)
	=
	\lim_{j \in J} 
	\sum_{y \in \Omega^\prime}
	r^{\lambda^{\prime\prime} (j)} (\vx , y)
	\Bo (y) 
	=
	\sum_{y \in \Omega^\prime}
	\tilde{r} (\vx , y)
	\Bo (y) ,
\end{align*}
which implies $\widetilde{\Go} = \tilde{r}\ast \Bo \ppeq \Bo .$
\item
For each joint observable $\Go \in \joint (\{\Ao_\ell\}_{\ell =1}^n)$ 
we define
\begin{equation}
K_{\Go} :=
\left\{  p \in \markov{\tilde{\Omega}}{\tilde{\Omega}} \middle|
p \ast \Go \in \joint (\{\Ao_\ell\}_{\ell =1}^n)
\right\} ,
\notag 
\end{equation}
which is a nonempty compact subset of $\markov{\tilde{\Omega}}{\tilde{\Omega}} .$ 
By assumption, for $\lambda^\prime \geq \lambda $ there exists a Markov kernel
$p^{\lambda^\prime | \lambda} \in K_{\Go_\lambda} $
such that $\Go_{\lambda^\prime} = p^{\lambda^\prime | \lambda} \ast \Go_\lambda .$
For each $\lambda , \lambda^\prime \in \Lambda ,$
we define $q^{\lambda^\prime | \lambda } \in K_{\Go_\lambda}$ by
\[
	q^{\lambda^\prime | \lambda} (\vx^\prime , \vx)
	:=
	\begin{cases}
		p^{\lambda^\prime | \lambda} (\vx^\prime , \vx)
		& \text{if $  \lambda \leq \lambda^\prime ,$} \\
		\delta_{\vx^\prime , \vx}
		& \text{otherwise.}
	\end{cases}
\]
By Tychonoff's theorem, the set 
$\prod_{\lambda \in \Lambda} K_{\Go_\lambda}$ is compact in the product topology,
and hence there exists a subnet 
$(\Go_{\lambda^\prime (i)})_{i \in I}$ such that 
$(q^{\lambda^\prime (i)|\lambda})_{i \in I}$
converges to some $q^{\lambda} \in K_{\Go_\lambda}$
for every $\lambda \in \Lambda .$
For each $\lambda \in \Lambda $
and each $ i \in I$ satisfying $\lambda^\prime (i) \geq \lambda ,$
we have
\begin{align*}
	\Go_{\lambda^\prime (i)} (\vx^\prime)
	&=
	\sum_{\vx \in \tilde{\Omega}}
	p^{\lambda^\prime (i) | \lambda} (\vx^\prime  , \vx)
	\Go_\lambda (\vx)
	=
	\sum_{\vx \in \tilde{\Omega}}
	q^{\lambda^\prime (i) | \lambda} (\vx^\prime  , \vx)
	\Go_\lambda (\vx) \\
	& \to
	\sum_{\vx \in \tilde{\Omega}}
	q^{ \lambda} (\vx^\prime  , \vx)
	\Go_\lambda (\vx) .
\end{align*}
Hence we may define $\widetilde{\Go} \in \joint (\{\Ao_\ell\}_{\ell =1}^m)$
by
\[
\widetilde{\Go} (\vx^\prime) 
:= \lim_{i \in I }  \Go_{\lambda^\prime (i)}  (\vx^\prime)
= \sum_{\vx \in \tilde{\Omega}} q^{ \lambda} (\vx^\prime  , \vx) \Go_\lambda (\vx) 
\]
By the last equality $[\widetilde{\Go}]$ is a lower bound of 
$([\Go_\lambda])_{\lambda \in \Lambda} .$

Now we take an observable $(\Bo (y))_{y \in \Omega^\prime}$
satisfying
$\Bo \ppeq \Go_{\lambda}$
for all $\lambda \in \Lambda .$
Then for each $\lambda \in \Lambda$ 
there exists a Markov kernel 
$r^\lambda \in \markov{\Omega^\prime}{\tilde{\Omega}}$
such that $\Bo = r^\lambda \ast \Go_\lambda .$
Since $\markov{\Omega^\prime}{\tilde{\Omega}}$ is compact, 
there exists a subnet $(r^{\lambda^{\prime\prime} (j)})_{j \in J}$
of $(r^{\lambda^{\prime} (i)})_{i \in I}$ such that
$r^{\lambda^{\prime\prime} (j)}$ converges to some
$\tilde{r} \in \markov{\Omega^\prime}{\tilde{\Omega}} .$
Then we have
\begin{align*}
	&\left\| \sum_{\vx \in \tilde{\Omega}}
	\tilde{r}(y , \vx) \widetilde{\Go} (\vx)
	-\Bo (y) \right\|
	\\
	&= \left\|
	\sum_{\vx \in \tilde{\Omega}}
	\tilde{r}(y , \vx) \widetilde{\Go} (\vx)
	-
	\sum_{\vx \in \tilde{\Omega}}
	r^{\lambda^{\prime\prime} (j)}(y , \vx) \Go_{\lambda^{\prime\prime} (j)} (\vx)
	\right\|
	\\
	&\leq
	\sum_{\vx \in \tilde{\Omega}}
	\tilde{r}(y , \vx) 
	\left\|
	\widetilde{\Go} (\vx) - \Go_{\lambda^{\prime\prime} (j)} (\vx) 
	\right\|
	+
	\sum_{\vx \in \tilde{\Omega}}
	| \tilde{r}(y , \vx) - r^{\lambda^{\prime\prime} (j)} (y , \vx) |
	\left\| \Go_{\lambda^{\prime\prime} (j)} (\vx) \right\|
	\\
	&\leq
	\sum_{\vx \in \tilde{\Omega}}
	\tilde{r}(y , \vx) 
	\left\|
	\widetilde{\Go} (\vx) - \Go_{\lambda^{\prime\prime} (j)} (\vx) 
	\right\|
	+
	\sum_{\vx \in \tilde{\Omega}}
	| \tilde{r}(y , \vx) - r^{\lambda^{\prime\prime} (j)} (y , \vx) |
	\\
	&\to 0 .
\end{align*}
This implies $\Bo = \tilde{r}\ast \widetilde{\Go} \ppeq \widetilde{\Go} ,$
which completes the proof. \qedhere
\end{enumerate}
\end{proof}

We can easily see that
$[\widetilde{\Go}]$ in Theorem~\ref{theo:min} is a supremum or an infinimum 
of $([\Go_\lambda])_{\lambda \in \Lambda} $
in $\joint (\{ \Ao_\ell \}_{\ell=1}^n) / \ppsim .$
Thus we obtain
\begin{corollary}
\label{coro:complete}
Let $\{\Ao_\ell\}_{\ell =1}^n$ be a finite set of compatible observables.
\begin{enumerate}[1.]
\item
The poset $\joint (\{\Ao_\ell\}_{\ell =1}^n) / \ppsim $
is both upper and lower directed complete (see Appendix \ref{sec:order}).
\item
Any joint observable $\Go \in \joint (\{\Ao_\ell\}_{\ell =1}^n) $ 
is upper bounded by a maximal joint observable and
lower bounded by a minimal joint observable.
\end{enumerate}
\end{corollary}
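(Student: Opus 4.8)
The plan is to obtain both parts as consequences of Theorem~\ref{theo:min} together with the remark recorded immediately before the corollary, namely that the bound $[\widetilde{\Go}]$ furnished by the theorem is in fact a supremum (in the increasing case) or an infimum (in the decreasing case) in $\joint(\{\Ao_\ell\}_{\ell=1}^n)/\ppsim$. Part~1 is then essentially a reformulation of the theorem in the language of directed subsets, and Part~2 is a Zorn's lemma argument that uses Part~1 to provide bounds for chains.

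For Part~1 I would exploit the standard correspondence between monotone nets and directed subsets. Given an upward-directed subset $D\subseteq\joint(\{\Ao_\ell\}_{\ell=1}^n)/\ppsim$, pick a representative joint observable for each class and regard $D$, ordered by the induced partial order, as its own index set; then $([\Go_d])_{d\in D}$ is a monotonically increasing net. Theorem~\ref{theo:min}(1) produces a joint observable $\widetilde{\Go}$ whose class is an upper bound of this net and, by the universal property $\Go_d\ppeq\Bo\ (\forall d)\Rightarrow\widetilde{\Go}\ppeq\Bo$, lies below every upper bound; hence $[\widetilde{\Go}]=\sup D$ exists in $\joint(\{\Ao_\ell\}_{\ell=1}^n)/\ppsim$. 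The order-dual argument applied to downward-directed subsets via Theorem~\ref{theo:min}(2) yields the existence of all infima of filtered subsets. This is exactly upper and lower directed completeness in the sense of Appendix~\ref{sec:order}.

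For Part~2 I would fix $\Go\in\joint(\{\Ao_\ell\}_{\ell=1}^n)$ and prove the existence of a minimal lower bound by applying Zorn's lemma to the down-set $S:=\{\,[\Co]\in\joint(\{\Ao_\ell\}_{\ell=1}^n)/\ppsim : [\Co]\ppeq[\Go]\,\}$, ordered by $\ppeq$. Let $C\subseteq S$ be a chain. If $C=\varnothing$, then $[\Go]$ itself is a lower bound lying in $S$; if $C\neq\varnothing$, it is downward-directed, so by Part~1 it has an infimum $[\Do]$, and since $[\Do]$ lies below some (hence any) element of $C$, which in turn lies below $[\Go]$, transitivity gives $[\Do]\ppeq[\Go]$, so $[\Do]\in S$. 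Thus every chain in $S$ has a lower bound in $S$, and Zorn's lemma applied to $S$ with the reversed order yields a minimal element $[\Go_{\min}]$ of $S$. A one-line verification shows it is minimal in the whole poset: if $[\Co]\ppeq[\Go_{\min}]$ then $[\Co]\ppeq[\Go]$, so $[\Co]\in S$, and minimality in $S$ forces $[\Co]=[\Go_{\min}]$; moreover $[\Go_{\min}]\ppeq[\Go]$. The existence of an upper-bounding maximal joint observable follows by the order-dual argument, applying Zorn's lemma to the up-set $S':=\{\,[\Co]:[\Go]\ppeq[\Co]\,\}$ and using suprema from Part~1.

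The routine portions are the passage between monotone nets and directed subsets and the checks that the Zorn bounds remain inside $S$ and $S'$. The only genuine point requiring care, and the main potential obstacle, is that $\ppeq$ is merely a preorder: one must work consistently in the quotient $\joint(\{\Ao_\ell\}_{\ell=1}^n)/\ppsim$ so that ``supremum'', ``infimum'', and the minimality/maximality of an equivalence class are well defined. Since this bookkeeping is precisely what Theorem~\ref{theo:min} was set up to handle, no difficulty arises beyond tracking equivalence classes.
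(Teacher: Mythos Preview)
Your proposal is correct and follows the same route the paper intends: the paper states the corollary without a written proof, relying on the remark that $[\widetilde{\Go}]$ from Theorem~\ref{theo:min} is a supremum/infimum together with the Appendix~\ref{sec:order} observation that directed completeness implies inductivity and hence, via Zorn's lemma, the existence of maximal/minimal elements. Your write-up simply makes explicit the directed-set-as-net passage for Part~1 and the Zorn argument on the down-set/up-set for Part~2, which is exactly what the paper leaves to the reader.
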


Unlike the minimal joint observables,
the set of maximal joint observables does depend on the choice of 
the post-processing representatives of the marginal observables 
as the following example demonstrates.

\begin{example}\label{ex:trivial}
Let $\Ao_1 = \Ao_2 = (\id_{\hi})$ be the single-outcome trivial observable.
Then $\Ao_1$ and $\Ao_2$ are compatible and the set 
$\joint (\{ \Ao_1 , \Ao_2 \})$
consists of only one element, 
the trivial observable $(\id_{\hi}) ,$
which is both minimal and maximal in $\joint (\{ \Ao_1 , \Ao_2 \}) .$

Clearly, the observables $\Ao_1 = \Ao_2$ are post-processing equivalent to the 
two-outcome trivial observable
$\Ao_1^\prime = \Ao_2^\prime = (\half \id_\hi , \half \id_\hi).$
If $\dim \hi \geq 2 ,$ there exists a non-trivial joint observable 
$\Go \in \joint  (\{ \Ao_1^\prime , \Ao_2^\prime \}) .$
Namely, fix an effect $0\neq T \neq \id$ and define
\begin{align*}
\Go(1,1)=\Go(2,2)=\half T \, , \quad \Go(1,2)=\Go(2,1)=\half (\id - T) \, .
\end{align*}
This is a joint observable of $\Ao_1^\prime$ and $\Ao_2^\prime$.
From Corollary~\ref{coro:complete} we conclude that $\Go$ is upper bounded by 
a maximal joint observable $\Go^\prime \in \joint  (\{ \Ao_1^\prime , \Ao_2^\prime \}),$
which is also non-trivial.
Hence there is no one-to-one correspondence between 
$\maxjoint (\{ \Ao_1 , \Ao_2 \})$ and
$\maxjoint  (\{ \Ao_1^\prime , \Ao_2^\prime \}) $
as in Proposition~\ref{prop:min_independence}.
\end{example}

\section{Characterization of minimal joint observable}
\label{sec:characterization}
In this section we give an algorithm to determine
whether a given joint observable is minimal or not.
Throughout this section, we fix a finite set of compatible observables
$\{ \Ao_\ell \}_{\ell =1}^n $ and
a joint observable
$\Go \in \joint (\{ \Ao_\ell \}_{\ell =1}^n) .$
Let $\Omega_\ell$ be the outcome set of $\Ao_\ell$
and denote $\tilde{\Omega} := \Omega_1 \times \dots \times \Omega_n .$

We need a bunch of auxiliary definitions before we can state our results.
For any two observables $\Ao$ and $\Bo$, with outcome sets $\Omega$ and $\Omega^\prime$, respectively,
we define
\[
	K(\Ao , \Bo)
	:=
	\left\{
	p \in \markov{\Omega}{\Omega^\prime}
	\mid
	\Ao = p \ast \Bo 
	\right\} ,
\]
which is, if nonempty, a compact convex subset of 
$\markov{\Omega}{\Omega^\prime} .$
Clearly, $K(\Ao,\Bo)$ is nonempty if and only if $\Ao \ppeq \Bo$.

We define
\[
	K_\Go
	:=
	\{
	p \in \markov{\tilde{\Omega}}{\tilde{\Omega}}
	\mid
	p \ast \Go \in \joint (\{ \Ao_\ell \}_{\ell =1}^n)
	\} .
\]
This is the set of those post-processings that are allowed for $\Go$ so that it is still a joint observable.
For $p \in \real^{\tilde{\Omega} \times \tilde{\Omega} }$, we have 
$p \in K_\Go$ if and only if $p$ satisfies the following linear (in)equalities:
\begin{gather*}
	\sum_{\vx^\prime \in \tilde{\Omega}}
	p(\vx^\prime , \vx)
	=1 
	\quad
	(\vx \in \tilde{\Omega}) ,
	\\
	\sum_{\vx^\prime \in \tilde{\Omega} , \pi_\ell (\vx^\prime) = x_\ell^\prime}
	\sum_{\vx \in \tilde{\Omega}}
	p(\vx^\prime , \vx) \Go (\vx)
	=
	\Ao_\ell (x_\ell^\prime)
	\quad
	(\ell \in \{  1, \dots , n\} ; x_\ell^\prime \in \Omega_\ell) ,
	\\
	p(\vx^\prime, \vx) \geq 0
	\quad
	(\vx , \vx^\prime \in \tilde{\Omega}) ,
\end{gather*}
where $\pi_\ell \colon \tilde{\Omega}\to \Omega_\ell$
is the canonical projection.
Thus, $K_\Go$ is a compact convex polytope on the Euclidean space
$\real^{\tilde{\Omega} \times \tilde{\Omega}} .$
It follows that the set of extreme points of $K_\Go$ is finite 
and can be written as $\{  p_1 , \dots , p_N\}$ (see Appendix~\ref{sec:polytope}).
We define 
\[
	p_\ast := \frac{1}{N} \sum_{k=1}^N p_k \in K_\Go \, .
\]

For each $\ell \in \{ 1 , \dots , n\} ,$
$K (\Ao_\ell , \Go)$ is also a nonempty compact convex polytope
on $\real^{\Omega_\ell \times \tilde{\Omega}} $
and the set of extreme points of $K (\Ao_\ell , \Go)$
can be written as
$\{q_{\ell , 1} , \dots , q_{\ell , N_\ell}  \} .$
We define $q_\ell \in  K (\Ao_\ell , \Go)$ by
\[
	q_\ell := \frac{1}{N_\ell} \sum_{k=1}^{N_\ell} q_{\ell , k}
\]
and $q_\ast \in K_\Go $ by
\[
	q_\ast (\vx^\prime , \vx)
	:=
	\prod_{\ell =1}^n q_{\ell} (\pi_\ell (\vx^\prime) , \vx) .
\]

Finally, for $r \in \markov{\tilde{\Omega}}{\tilde{\Omega}} ,$
$r$ is said to be \emph{conditionally independent}
if there exist Markov kernels 
$r_\ell \in \markov{\Omega_\ell}{\tilde{\Omega}}$
$(\ell \in \{ 1, \dots , n\})$
such that
\begin{equation}
	r (\vx^\prime , \vx )
	=
	\prod_{\ell =1}^n 
	r_\ell (\pi_\ell (\vx^\prime) , \vx) .
	\label{eq:rprod}
\end{equation}
We define
\[
K_\Go^{\mathrm{ind}} := \{ r \in K_\Go \mid \text{$r$ is conditionally independent}   \} .
\]

Now we can state and prove the following result.
\begin{theorem}
\label{theo:min1}
The following conditions are equivalent.
\begin{enumerate}[(i)]
\item \label{1:1} 
$\Go$ is a minimal joint observable.
\item \label{1:2}
For each $\vx_1 , \vx_2 , \vx^\prime \in \tilde{\Omega}$ and each 
$p \in K_\Go ,$
if $\Go (\vx_1)$ and $\Go (\vx_2)$ are linearly independent
then $p (\vx^\prime , \vx_1) p (\vx^\prime , \vx_2) = 0.$
\item \label{1:3}
For each $\vx_1 , \vx_2 , \vx^\prime \in \tilde{\Omega} ,$
if $\Go (\vx_1)$ and $\Go (\vx_2)$ are linearly independent
then $p_\ast (\vx^\prime , \vx_1) p_\ast (\vx^\prime , \vx_2) = 0.$
\item \label{1:4}
$p_\ast \ast \Go \ppsim \Go .$
\item \label{1:5}
$r \ast \Go \ppsim \Go$ for all $r \in K_\Go^{\mathrm{ind}} .$
\item \label{1:6}
For each $\vx_1 , \vx_2 , \vx^\prime \in \tilde{\Omega} $
and each $(r_\ell)_{\ell =1}^n \in K (\Ao_\ell , \Go) ,$
if $\Go (\vx_1)$ and $\Go (\vx_2)$ are linearly independent
then there exists $\ell \in \{ 1, \dots , n\}$ 
such that 
$r_\ell (\pi_\ell (\vx^\prime) , \vx_1) r_\ell (\pi_\ell (\vx^\prime) , \vx_2) = 0 .$
\item \label{1:6.2}
For each $\vx_1 , \vx_2 , \vx^\prime \in \tilde{\Omega} ,$
if $\Go (\vx_1)$ and $\Go (\vx_2)$ are linearly independent
then there exists $\ell \in \{ 1, \dots , n\}$ 
such that 
$p(\pi_\ell (\vx^\prime) , \vx_1) p (\pi_\ell (\vx^\prime) , \vx_2) = 0$
for all $p \in K (\Ao_\ell , \Go) .$ 
\item \label{1:7}
For each $\vx_1 , \vx_2 , \vx^\prime \in \tilde{\Omega} ,$
if $\Go (\vx_1)$ and $\Go (\vx_2)$ are linearly independent
then there exists $\ell \in \{ 1, \dots , n\}$ 
such that 
$q_\ell (\pi_\ell (\vx^\prime) , \vx_1) q_\ell(\pi_\ell (\vx^\prime) , \vx_2) = 0 .$
\item \label{1:8}
For each $\vx_1 , \vx_2 , \vx^\prime \in \tilde{\Omega} ,$
if $\Go (\vx_1)$ and $\Go (\vx_2)$ are linearly independent
then $q_\ast (\vx^\prime , \vx_1) q_\ast (\vx^\prime , \vx_2) = 0.$
\item \label{1:9}
$q_\ast \ast \Go \ppsim \Go .$
\end{enumerate}
\end{theorem}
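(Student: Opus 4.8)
The plan is to route everything through a single per-kernel recoverability test and then translate that test into the remaining algebraic forms by elementary convex geometry. The starting observation is that the joint observables lying below $\Go$ in the post-processing order are exactly the observables $p\ast\Go$ with $p\in K_\Go$; hence $\Go$ is minimal, i.e.\ condition (i) holds, if and only if $p\ast\Go\ppsim\Go$ for every $p\in K_\Go$. The crux is therefore the following lemma, which I would prove first and which drives the whole equivalence: for any $p\in\markov{\tilde{\Omega}}{\tilde{\Omega}}$ one has $p\ast\Go\ppsim\Go$ if and only if $p(\vx',\vx_1)\,p(\vx',\vx_2)=0$ for all $\vx'$ whenever $\Go(\vx_1)$ and $\Go(\vx_2)$ are linearly independent. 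Applying this lemma to every $p\in K_\Go$ immediately gives the equivalence of (i) and (ii).

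For the lemma, the ``if'' direction I would obtain by a coarse-graining argument. Partition the support $\{\vx:\Go(\vx)\neq0\}$ into classes of mutually proportional operators; the hypothesis that $p$ never merges linearly independent outcomes says precisely that, for each $\vx'$, all outcomes $\vx$ with $p(\vx',\vx)>0$ lie in a single class, so each $p\ast\Go(\vx')$ is proportional to one class representative. Coarse-graining $\Go$ by classes is reversible (within a class the operators are proportional, hence recoverable by a Markov kernel), and by comparing total operators one checks that $p\ast\Go$ coarse-grains to the same class observable; transitivity then yields $\Go\ppeq p\ast\Go$. The ``only if'' direction is \emph{the main obstacle}. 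Here positivity is essential: if $p$ put positive weight of two linearly independent $\Go(\vx_1),\Go(\vx_2)$ into one outcome $\vx'$, then any recovering Markov kernel $s$ must, by the column-sum constraint $\sum_{\vx}s(\vx,\vx')=1$, route the merged effect $p\ast\Go(\vx')$ into some $\Go(\vx)$, injecting an uncancellable independent component; I would formalise this by showing that the total operator carried on each proportionality ray is a post-processing invariant, so that merging, which moves weight onto the genuinely new ray spanned by $p(\vx',\vx_1)\Go(\vx_1)+p(\vx',\vx_2)\Go(\vx_2)$, is incompatible with $\Go\ppsim p\ast\Go$.

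With the lemma in hand the remaining conditions split into two clusters. Within the first, (ii)$\Leftrightarrow$(iii), and likewise (vii)$\Leftrightarrow$(viii) in the second, follow from the fact that $p_\ast$ (resp.\ $q_\ell$), being the barycentre of all extreme points of the relevant polytope, vanishes at a coordinate exactly when every element of the polytope does; hence $p_\ast(\vx',\vx_i)=0$ forces $p(\vx',\vx_i)=0$ for all admissible $p$. The equivalences (iii)$\Leftrightarrow$(iv) and (ix)$\Leftrightarrow$(x) are just the lemma applied to the single kernels $p_\ast$ and $q_\ast$. For the second cluster I would first record that $K_\Go^{\mathrm{ind}}$ consists exactly of the kernels $r(\vx',\vx)=\prod_\ell r_\ell(\pi_\ell(\vx'),\vx)$ with $r_\ell\in K(\Ao_\ell,\Go)$, a short marginalisation computation showing $r\ast\Go\in\joint(\{\Ao_\ell\})$ iff each $r_\ell$ recovers $\Ao_\ell$. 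Combined with the factorisation $r(\vx',\vx_1)r(\vx',\vx_2)=\prod_\ell r_\ell(\pi_\ell(\vx'),\vx_1)r_\ell(\pi_\ell(\vx'),\vx_2)$ — which turns ``the product vanishes'' into ``some factor vanishes'' — this yields (v)$\Leftrightarrow$(vi) and, via the identical factorisation of $q_\ast$, (viii)$\Leftrightarrow$(ix); the remaining link (vi)$\Leftrightarrow$(vii) is a quantifier exchange, since choosing a worst-case $r_\ell$ separately for each $\ell$ is possible exactly because the factors are uncoupled.

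Finally I would close the loop with two bridges. The inclusion $K_\Go^{\mathrm{ind}}\subseteq K_\Go$ makes (ii)$\Rightarrow$(v) immediate. For the reverse bridge (vii)$\Rightarrow$(ii) I would use the partial marginalisation $\tilde p_\ell(x_\ell,\vx):=\sum_{\vx':\pi_\ell(\vx')=x_\ell}p(\vx',\vx)$: for any $p\in K_\Go$ this lies in $K(\Ao_\ell,\Go)$ because $p\ast\Go$ is a joint observable, and it satisfies $\tilde p_\ell(\pi_\ell(\vx'),\vx)\ge p(\vx',\vx)$; so if (vii) supplies an $\ell$ killing the product for every element of $K(\Ao_\ell,\Go)$, then $0=\tilde p_\ell(\pi_\ell(\vx'),\vx_1)\tilde p_\ell(\pi_\ell(\vx'),\vx_2)\ge p(\vx',\vx_1)p(\vx',\vx_2)\ge0$, forcing the product for $p$ to vanish. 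As each cluster is internally equivalent, these two bridges make all ten conditions equivalent. The only genuinely analytic input is the ``only if'' half of the lemma; everything else is finite convex geometry together with the marginalisation identities.
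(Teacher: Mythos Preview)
Your proposal is correct and follows essentially the same route as the paper: the central lemma you isolate is exactly Corollary~\ref{coro:mk} (proved in the appendix via the pairwise linearly independent representative and minimal sufficiency), and the remaining implications --- barycentre domination for (ii)$\Leftrightarrow$(iii) and (vii)$\Leftrightarrow$(viii), product factorisation for (v)$\Leftrightarrow$(vi) and (viii)$\Leftrightarrow$(ix), and the marginalisation bridge back to (ii) --- match the paper's proof almost step for step. Your direct argument for (vi)$\Leftrightarrow$(vii) via independent worst-case choices of the $r_\ell$ is a slight shortcut over the paper's detour through (viii), and your sketch of the ``only if'' half of the lemma via ray-total invariance is equivalent to, though less formal than, the paper's minimal-sufficiency argument in Proposition~\ref{prop:mk}.
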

\begin{proof}
The equivalences 
\eqref{1:1}$\iff$\eqref{1:2}, \eqref{1:3}$\iff$\eqref{1:4}, 
and \eqref{1:8}$\iff$\eqref{1:9} follow from Corollary~\ref{coro:mk}
and the definition of the minimality.
\eqref{1:1}$\implies$\eqref{1:5}, 
\eqref{1:2}$\implies$\eqref{1:3},
\eqref{1:5}$\implies$\eqref{1:9},
and
\eqref{1:6.2}$\implies$\eqref{1:6}
are obvious.
The equivalence \eqref{1:7}$\iff$\eqref{1:8} is immediate from the definition of $q_\ast .$
The equivalence \eqref{1:5}$\iff$\eqref{1:6} follows from Corollary~\ref{coro:mk}
and that $r$ given by \eqref{eq:rprod} is in $K_\Go^{\mathrm{ind}}$
if and only if $r_\ell \in K (\Ao_\ell , \Go)$
for each $\ell \in \{ 1 , \dots , n \} .$ 

\eqref{1:3}$\implies$\eqref{1:2}. 
Assume \eqref{1:3} and take an arbitrary element
$p \in K_\Go .$
By the finite-dimensional Krein-Milman theorem,
$p$ is a convex combination of $\{  p_1 , \dots , p_N \} ,$
that is, there exists $(\mu_k)_{k =1}^N \in [0,1]^N$ such that
$\sum_{k=1}^N \mu_k =1$ and
$p = \sum_{k=1}^N \mu_k p_k .$
Then
\[
	p(\vx^\prime , \vx)
	=
	\sum_{k=1}^N \mu_k p_k (\vx^\prime , \vx)
	\leq
	\sum_{k=1}^N  p_k (\vx^\prime , \vx)
	=
	N p_\ast (\vx^\prime , \vx).
\]
From this inequality and the assumption~\eqref{1:3},
the condition~\eqref{1:2} follows.

\eqref{1:7}$\implies$\eqref{1:6.2} can be shown similarly.

\eqref{1:6}$\implies$\eqref{1:2}.
Assume that (ii) is not true.
Then there exist $r \in K_\Go $ 
and
$\vx_0 , \vx_1 , \vx_2 \in \tilde{\Omega}  $
such that
$\Go (\vx_1)$ and $\Go (\vx_2)$ are linearly independent 
and
$r(\vx_0 , \vx_1) r(\vx_0 , \vx_2) \neq 0 .$
We define $r_\ell \in K (\Ao_\ell , \Go)$
by
\[
r_\ell (x_\ell^\prime , \vx) 
:= 
\sum_{\vx^\prime \in \tilde{\Omega}, \pi_\ell (\vx^\prime) = x_\ell^\prime} 
r(\vx^\prime , \vx).
\]
Then we have $r(\vx^\prime , \vx) \leq r_\ell (\pi_\ell (\vx^\prime) , \vx)$
and hence
\[
	0 <
	(r(\vx_0, \vx_1)   r(\vx_0, \vx_2)   )^{n}
	\leq
	\prod_{\ell =1}^n
	r_\ell (\pi_\ell (\vx_0), \vx_1) r_\ell (\pi_\ell (\vx_0), \vx_2) .
\]
Therefore we obtain
$r_\ell (\pi_\ell (\vx_0), \vx_1) r_\ell (\pi_\ell (\vx_0) , \vx_2) \neq 0$
for all $\ell \in \{ 1 , \dots , n\} ,$
proving that \eqref{1:6} is not true.
\end{proof}

Theorem~\ref{theo:min1} gives the following two algorithms to 
determine whether a given joint observable $\Go$ is minimal or not.
The first one is based on the condition~(iii).
According to Appendix~\ref{sec:polytope}, 
the set extreme points $\{ p_1 , \dots , p_N\} $
of $K_\Go$ can be
explicitly calculated, so can be $p_\ast  .$
The other one is based on the condition~\eqref{1:7} or \eqref{1:8},
which can be explicitly checked by calculating
the set of extreme points 
$\{ q_{\ell , 1} , \dots , q_{\ell , N_\ell} \}$
of the polytope $K (\Ao_\ell , \Fo)$
for each $\ell .$

At this point, we recall that an observable $\Ao$ with an outcome set $\Omega$ is said to be \emph{pairwise linearly independent}
if any pair $(\Ao (x_1) , \Ao(x_2))$, $x_1 , x_2 \in \Omega ; \, x_1 \neq x_2$,
is linearly independent.
Every observable is post-processing equivalent to a pairwise linearly independent observable unique up to the permutation of the outcome set \cite{MaMu90a,Kuramochi15b}.
We refer to Appendix \ref{sec:pp2} for further details on this property. 
In the following we develop a method to determine the minimality of a joint observable which is pairwise linearly independent.

Firstly, let $\Go$ be a joint observable.
For $ u \in \real^{\Omega_\ell \times \tilde{\Omega}} ,$
let us consider the following system of homogeneous linear (in)equalities:
\begin{gather}
	\sum_{\vx \in \tilde{\Omega}}
	u(x_\ell^\prime , \vx)
	\Go (\vx)
	=0 
	\quad
	(x_\ell^\prime \in \Omega_\ell) ,
	\label{eq:C1}
	\\
	\sum_{x_\ell^\prime \in \Omega_\ell}
	u(x_\ell^\prime , \vx)
	=0
	\quad
	(\vx \in \tilde{\Omega}),
	\label{eq:C2}
	\\
	u (x_\ell^\prime , \vx)
	\begin{cases}
		\leq 0  & \text{if $x_\ell^\prime = \pi_\ell (\vx) ,$} \\
		\geq 0  & \text{if $x_\ell^\prime \neq \pi_\ell (\vx) .$} 
	\end{cases}
	\label{eq:C3}
\end{gather}
We denote by $C_\ell (\Go)$ the pointed polyhedral cone 
on $\real^{\Omega_\ell \times \tilde{\Omega}}$
consisting of the solutions of
\eqref{eq:C1}, \eqref{eq:C2}, and \eqref{eq:C3}.

\begin{proposition}
\label{prop:min2}
Suppose that $\Go$ is pairwise linearly independent.
Then $\Go$ is a minimal joint observable if and only if
$C_\ell (\Go) = \{ 0 \}$ for all $\ell \in \{ 1 ,\dots , n\} .$
\end{proposition}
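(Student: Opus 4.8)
The plan is to relate the cone $C_\ell(\Go)$ to a single distinguished element of $K(\Ao_\ell , \Go)$ and then read off minimality from Theorem~\ref{theo:min1}. First I would introduce the \emph{marginal kernel} $m_\ell \in \markov{\Omega_\ell}{\tilde{\Omega}}$ defined by $m_\ell(x_\ell^\prime , \vx) := \delta_{x_\ell^\prime , \pi_\ell(\vx)}$. Since $\Go$ is a joint observable, $\sum_{\vx} m_\ell(x_\ell^\prime , \vx)\Go(\vx) = \sum_{\vx : \pi_\ell(\vx) = x_\ell^\prime}\Go(\vx) = \Ao_\ell(x_\ell^\prime)$, so $m_\ell \in K(\Ao_\ell , \Go)$. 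The key observation is that the homogeneous system \eqref{eq:C1}--\eqref{eq:C3} describes exactly the feasible directions of $K(\Ao_\ell , \Go)$ based at $m_\ell$: for every $r_\ell \in K(\Ao_\ell , \Go)$ the difference $u := r_\ell - m_\ell$ satisfies \eqref{eq:C1}--\eqref{eq:C3}, and conversely for every $u \in C_\ell(\Go)$ and all sufficiently small $\varepsilon > 0$ the kernel $m_\ell + \varepsilon u$ lies in $K(\Ao_\ell , \Go)$. Both verifications are routine: \eqref{eq:C1} and \eqref{eq:C2} encode preservation of the marginal and of normalization, while the sign pattern \eqref{eq:C3} is precisely what keeps the perturbed kernel nonnegative (the factor $\varepsilon$ being needed only to protect the entries with $x_\ell^\prime = \pi_\ell(\vx)$). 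Consequently $C_\ell(\Go) = \{0\}$ if and only if $K(\Ao_\ell , \Go) = \{m_\ell\}$, i.e.\ $m_\ell$ is the unique post-processing recovering $\Ao_\ell$ from $\Go$.

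Assuming $C_\ell(\Go) = \{0\}$ for all $\ell$, I would establish minimality by checking condition~\eqref{1:6.2} of Theorem~\ref{theo:min1}. Let $\vx_1 , \vx_2 , \vx^\prime \in \tilde{\Omega}$ with $\Go(\vx_1)$ and $\Go(\vx_2)$ linearly independent; then $\vx_1 \neq \vx_2$, so $\pi_\ell(\vx_1) \neq \pi_\ell(\vx_2)$ for some $\ell$. By the previous step $K(\Ao_\ell , \Go) = \{m_\ell\}$, and $m_\ell(\pi_\ell(\vx^\prime), \vx_1)\,m_\ell(\pi_\ell(\vx^\prime), \vx_2) = \delta_{\pi_\ell(\vx^\prime), \pi_\ell(\vx_1)}\,\delta_{\pi_\ell(\vx^\prime), \pi_\ell(\vx_2)} = 0$, since the two deltas cannot both equal $1$. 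Hence \eqref{1:6.2} holds and $\Go$ is minimal. Note that this direction does not use pairwise linear independence.

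For the converse I would argue the contrapositive: if $C_{\ell_0}(\Go) \neq \{0\}$ for some $\ell_0$, then $\Go$ is not minimal, which I prove by violating condition~\eqref{1:2}. Fix $0 \neq u \in C_{\ell_0}(\Go)$, set $r_{\ell_0} := m_{\ell_0} + \varepsilon u \in K(\Ao_{\ell_0} , \Go)$ for small $\varepsilon > 0$ and $r_k := m_k$ for $k \neq \ell_0$, and form the conditionally independent kernel $r(\vx^\prime , \vx) := \prod_{k=1}^n r_k(\pi_k(\vx^\prime), \vx)$, which lies in $K_\Go^{\mathrm{ind}} \subseteq K_\Go$ because each $r_k \in K(\Ao_k , \Go)$ (as noted in the proof of Theorem~\ref{theo:min1}). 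Choose $\vx$ with $u(x_{\ell_0}^\prime , \vx) \neq 0$ for some $x_{\ell_0}^\prime$; then \eqref{eq:C2}--\eqref{eq:C3} force $u(\pi_{\ell_0}(\vx), \vx) < 0$ and $u(a, \vx) > 0$ for some $a \neq \pi_{\ell_0}(\vx)$. Let $\vx_1$ be $\vx$ with its $\ell_0$-th coordinate changed to $a$, and put $\vx_2 := \vx$ and $\vx^\prime := \vx_1$. Since the factors $r_k = m_k$ $(k \neq \ell_0)$ collapse to Kronecker deltas that equal $1$ precisely when the arguments agree off the $\ell_0$-th coordinate, a direct computation gives $r(\vx^\prime , \vx_2) = \varepsilon\, u(a, \vx) > 0$ and $r(\vx^\prime , \vx_1) = 1 + \varepsilon\, u(a, \vx_1) > 0$, while $\vx_1 \neq \vx_2$. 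Pairwise linear independence of $\Go$ then makes $\Go(\vx_1)$ and $\Go(\vx_2)$ linearly independent, so $r \in K_\Go$ violates \eqref{1:2} and $\Go$ is not minimal.

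The step I expect to be the main obstacle is this last witness construction. The difficulty is to turn an abstract nonzero direction $u$ in the feasible-direction cone at $m_{\ell_0}$ into a concrete post-processing in $K_\Go$ that demonstrably fails \eqref{1:2}; the conditional-independence product is what allows a perturbation confined to the single slot $\ell_0$ to survive as a genuine element of $K_\Go$, and one must select $\vx_1 , \vx_2$ differing only in that slot so that they share a row $\vx^\prime$ on which $r$ is jointly positive. It is exactly here that pairwise linear independence is indispensable, since it upgrades $\vx_1 \neq \vx_2$ to linear independence of $\Go(\vx_1)$ and $\Go(\vx_2)$, and hence to a true violation of minimality.
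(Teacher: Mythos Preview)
Your proposal is correct and follows essentially the same route as the paper. The converse direction is identical up to relabelling: the paper also builds $r_{\ell_0} = m_{\ell_0} + u$ (after scaling $u$), takes $r_k = m_k$ for $k \neq \ell_0$, forms the conditionally independent product, and exhibits two points $\vx_1 \neq \vx_2$ differing only in the $\ell_0$-th slot on which $r$ is jointly positive. For the forward direction the paper uses the same key observation $r_\ell - m_\ell \in C_\ell(\Go)$ but, rather than passing through Theorem~\ref{theo:min1}\eqref{1:6.2}, argues directly: given $\Go' = r \ast \Go$ it marginalizes $r$ to $r_\ell$, gets $r_\ell = m_\ell$ from $C_\ell(\Go) = \{0\}$, and then uses $r(\vx',\vx) \le r_\ell(\pi_\ell(\vx'),\vx) = \delta_{\pi_\ell(\vx'),\pi_\ell(\vx)}$ to conclude $r = \delta$ and hence $\Go' = \Go$. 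Your packaging via the equivalence $C_\ell(\Go) = \{0\} \Leftrightarrow K(\Ao_\ell,\Go) = \{m_\ell\}$ is a clean way to phrase the same computation.
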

\begin{proof}
Assume that $C_\ell (\Go) \neq \{ 0 \}$ for some $\ell .$
Then we can take $ u \in C_\ell (\Go)  ,$
$x_\ell^0 \in \Omega_\ell ,$
and $\vx_1 \in \tilde{\Omega}$
such that 
$u (x_\ell^0 , \vx_1) \neq 0 .$
If $x_\ell^0 = \pi_\ell (\vx_1) ,$
then by \eqref{eq:C2}
there exists another $x_\ell^{1} \in \Omega_\ell$
satisfying $x_\ell^{1} \neq x_\ell^0$
and $p (x_\ell^{1} , \vx_1) \neq 0 .$
Thus, by also considering~\eqref{eq:C3},
we may assume $x_\ell^0 \neq \pi_\ell (\vx_1)$
and $ u (x_\ell^0 , \vx_1) > 0 .$
We define $\vx_2 \in \tilde{\Omega}$
by
\[
	\pi_{\ell^\prime} (\vx_2)
	=
	\begin{cases}
		x_\ell^0 & \text{if $\ell^\prime = \ell ,$} \\
		\pi_{\ell^\prime} (\vx_1) & \text{if $\ell^\prime \neq \ell . $}
	\end{cases}
\]
By multiplying a small positive constant if necessary, 
we may assume 
\begin{equation}
	|u (x_\ell^\prime , \vx)| <1
	\quad
	\forall (x_\ell^\prime , \vx) \in \Omega_\ell \times \tilde{\Omega} .
	\label{eq:C4}
\end{equation}
We define $r_\ell \in \real^{\Omega_\ell  \times \tilde{\Omega}}$
by
\[
	r_\ell (x_\ell^\prime , \vx)
	:=
	\delta_{x_\ell^\prime , \pi_\ell ( \vx )} 
	+ 
	u (x_\ell^\prime , \vx) .
\]
Then from \eqref{eq:C1}, \eqref{eq:C2}, \eqref{eq:C3},
and \eqref{eq:C4}, 
we have 
$r_\ell \in K (\Ao_\ell, \Go) .$
Furthermore,
we have $\vx_1 \neq \vx_2$
and 
\begin{gather*}
	r_\ell (x_\ell^0 , \vx_1)
	=
	u (x_\ell^0 , \vx_1) \neq 0 ,
	\\
	r_\ell (x_\ell^0 , \vx_2)
	=
	1 + u  (x_\ell^0 , \vx_2) \neq 0 .
\end{gather*}
We define $r \in \markov{\tilde{\Omega}}{\tilde{\Omega}}$
by
\begin{gather*}
	r ( \vx^\prime , \vx   )
	:=
	r_\ell (\pi_\ell (\vx^\prime) , \vx)
	\prod_{\ell^\prime \neq \ell}
	\delta_{ \pi_{\ell^\prime} (\vx^\prime) , \pi_{\ell^\prime} (\vx)  } .
\end{gather*}
Then $r \ast \Go \in \joint (\{\Ao_{\ell^\prime}  \}_{\ell^\prime =1}^n)$
and 
\begin{gather*}
	r ( \vx_2 , \vx_1)
	=
	r_\ell (x_\ell^0 , \vx_1) \neq 0 ,
	\\
	r ( \vx_2 , \vx_2)
	=
	r_\ell (x_\ell^0 , \vx_2)
	\neq 0 .
\end{gather*}
This implies that the condition~\eqref{1:2} of Theorem~\ref{theo:min1} is not true.
Therefore $\Go$ is not minimal.

Assume that $C_\ell (\Go) = \{ 0\}$ for all
$\ell \in \{ 1 ,\dots , n \} .$
Let $\Go^\prime \in \joint (\{\Ao_{\ell^\prime}  \}_{\ell^\prime =1}^n)$
be a  joint observable satisfying $\Go^\prime \ppeq \Go .$
Then we can take $r  \in \markov{\tilde{\Omega}}{\tilde{\Omega}}$
such that $\Go^\prime = r \ast \Go .$
For each $\ell$ we define
$r_\ell \in K (\Ao_\ell , \Go) $ by
\[
	r_\ell (x_\ell^\prime , \vx)
	:=
	\sum_{\vx^\prime , \pi_\ell (\vx^\prime) = x_\ell^\prime}
	r (\vx^\prime , \vx) ,
\]
and 
$u_\ell \in \real^{\Omega_\ell \times \tilde{\Omega}}$
by
\[
	u_\ell (x_\ell^\prime , \vx)
	:=
	r_\ell (x_\ell^\prime , \vx) 
	- \delta_{ x_\ell^\prime , \pi_\ell (\vx )  } .
\]
From $r_\ell(x_\ell^\prime , \vx)  \in [0,1]$
we can easily check that $u_\ell \in C_\ell (\Go ) .$
Therefore by assumption we have $u_\ell = 0$
and hence
$r_\ell  (x_\ell^\prime , \vx) = \delta_{ x_\ell^\prime , \pi_\ell (\vx )  } . $
If $\vx^\prime \neq \vx ,$ there exists $\ell$ with 
$\pi_\ell (\vx^\prime) \neq \pi_\ell (\vx) .$
Thus
\[
	r(\vx^\prime , \vx)
	\leq
	r_\ell (\pi_\ell (\vx^\prime) , \vx)
	=
	\delta_{\pi_\ell (\vx^\prime) , \pi_\ell (\vx)}
	=0.
\]
This implies $r (\vx^\prime , \vx) = \delta_{\vx , \vx^\prime}$
and hence 
$\Go^\prime = r \ast  \Go = \Go .$
Thus $\Go$ is minimal.
\end{proof}

For each $\ell ,$
$C_\ell (\Go)$ is a polyhedral cone and is the conical hull of a finite set
$\{ u_{\ell 1} , \dots , u_{\ell , M_\ell } \} ,$
which can be calculated according to Appendix~\ref{sec:polytope}.
Thus Proposition~\ref{prop:min2} provides a method to 
determine the minimality of $\Go$ when $\Go$ is pairwise linearly independent.

\begin{proposition}
\label{prop:independent}
Let 
$
\tilde{\Omega}_1 := 
\{ \vx \in \tilde{\Omega} \mid \Go (\vx) \neq 0 \}.
$
Suppose that $(\Go (\vx) )_{\vx \in \tilde{\Omega}_1}$ is linearly independent.
Then $\Go$ is a minimal joint observable.
\end{proposition}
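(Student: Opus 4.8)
The plan is to verify minimality straight from the definition, exploiting the rigidity that linear independence of the nonzero operators imposes on every admissible post-processing. Concretely, I would take an arbitrary joint observable $\Go^\prime \in \joint(\{\Ao_\ell\}_{\ell=1}^n)$ with $\Go^\prime \ppeq \Go$ and show that in fact $\Go^\prime = \Go$; since this trivially gives $\Go \ppeq \Go^\prime$, minimality follows. (Alternatively one can check condition~\eqref{1:6} of Theorem~\ref{theo:min1}, which reduces to the same computation, but arguing directly keeps the proof self-contained.)

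First I would write $\Go^\prime = p \ast \Go$ for some $p \in K_\Go$, which is legitimate precisely because $\Go^\prime \ppeq \Go$ and $\Go^\prime$ is again a joint observable on $\tilde\Omega$. The central step is to compare $\ell$th marginals. Since both $\Go$ and $\Go^\prime$ have $\Ao_\ell$ as their $\ell$th marginal, setting $r_\ell(x_\ell, \vx) := \sum_{\vx^\prime : \pi_\ell(\vx^\prime) = x_\ell} p(\vx^\prime, \vx)$ one obtains
\[
\sum_{\vx \in \tilde\Omega} \bigl[\, r_\ell(x_\ell, \vx) - \delta_{\pi_\ell(\vx), x_\ell} \,\bigr]\, \Go(\vx) = 0
\]
for every $\ell$ and every $x_\ell \in \Omega_\ell$. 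Because $\Go(\vx) = 0$ off $\tilde\Omega_1$, this is a linear relation among the operators $(\Go(\vx))_{\vx \in \tilde\Omega_1}$, so the hypothesis forces every coefficient indexed by $\vx \in \tilde\Omega_1$ to vanish, i.e.\ $r_\ell(x_\ell, \vx) = \delta_{\pi_\ell(\vx), x_\ell}$ for all $\vx \in \tilde\Omega_1$, all $\ell$, and all $x_\ell$.

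From here I would unwind the definition of $r_\ell$. Fixing $\vx \in \tilde\Omega_1$ and any $x_\ell \neq \pi_\ell(\vx)$ gives $\sum_{\vx^\prime : \pi_\ell(\vx^\prime) = x_\ell} p(\vx^\prime, \vx) = 0$; as $p \geq 0$, each summand vanishes, so $p(\vx^\prime, \vx) = 0$ whenever $\pi_\ell(\vx^\prime) \neq \pi_\ell(\vx)$ for some $\ell$, i.e.\ whenever $\vx^\prime \neq \vx$. The normalization $\sum_{\vx^\prime} p(\vx^\prime, \vx) = 1$ then yields $p(\vx^\prime, \vx) = \delta_{\vx^\prime, \vx}$ for $\vx \in \tilde\Omega_1$. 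Substituting back, $\Go^\prime(\vx^\prime) = \sum_{\vx \in \tilde\Omega_1} p(\vx^\prime, \vx) \Go(\vx) = \Go(\vx^\prime)$ for every $\vx^\prime$, the outcomes outside $\tilde\Omega_1$ contributing nothing since $\Go$ vanishes there. Hence $\Go^\prime = \Go$.

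The one genuinely delicate point — the step I would be most careful about — is the application of linear independence: it must be invoked only over $\tilde\Omega_1$, since the operators indexed by $\tilde\Omega \setminus \tilde\Omega_1$ are zero and impose no constraint, and one must check that the vanishing outcomes drop out cleanly both in the marginal identity and in the final substitution. Everything else is routine bookkeeping with the canonical projections $\pi_\ell$ together with the nonnegativity and normalization of $p$.
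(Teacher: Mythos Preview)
Your proposal is correct and follows essentially the same approach as the paper: take $\Go' = p \ast \Go$, compare $\ell$th marginals to obtain a linear relation among the $\Go(\vx)$, use linear independence over $\tilde\Omega_1$ to force $\sum_{\vx':\pi_\ell(\vx')=x_\ell} p(\vx',\vx) = \delta_{\pi_\ell(\vx),x_\ell}$, and then deduce $p(\vx',\vx)=\delta_{\vx',\vx}$ on $\tilde\Omega_1$. The only difference is cosmetic: the paper compresses your final step (from the marginal identity to $p(\vx',\vx)=\delta_{\vx',\vx}$) by citing the analogous argument in the preceding proposition, whereas you spell out the nonnegativity-and-normalization reasoning explicitly.
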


\begin{proof}
Let $\Go^\prime \in \joint (\{ \Ao_\ell \}_{\ell =1}^n)$ be a joint observable satisfying
$\Go^\prime \ppeq \Go$
and let $p \in \markov{\tilde{\Omega}}{\tilde{\Omega}}$ be a Markov kernel 
such that
$
\Go^\prime = p \ast \Go .
$
From $\Go ,\Go^\prime \in \joint (\{ \Ao_\ell \}_{\ell =1}^n )$
we have
\begin{align*}
	\sum_{\vx \in \tilde{\Omega}_1 ,  \pi_\ell (\vx) = x_\ell  }
	\Go (\vx)
	&=
	\Ao_\ell (x_\ell)
	= 
	\sum_{\vx^\prime \in \tilde{\Omega} , \pi_\ell (\vx^\prime) = x_\ell}
	\Go^\prime (\vx^\prime)
	\\ 
	&=
	\sum_{\vx \in \tilde{\Omega}_1} 
	\left(
	\sum_{\vx^\prime \in \tilde{\Omega} , \pi_\ell (\vx^\prime) = x_\ell }
	p(\vx^\prime , \vx) 
	\right)
	\Go (\vx) .
\end{align*}
Thus the linear independence of $(\Go (\vx) )_{\vx \in \tilde{\Omega}_1}$ implies
\[
	\sum_{\vx^\prime , \pi_\ell( \vx^\prime ) = x_\ell}  p(\vx^\prime , \vx)
	=
	\delta_{\pi_\ell(\vx) , x_\ell} 
\]
for all $\vx \in \tilde{\Omega}_1 .$
By the same discussion as in Proposition~\ref{prop:min2},
we obtain $p(\vx^\prime  ,\vx) = \delta_{\vx , \vx^\prime}$
($\forall \vx \in \tilde{\Omega}_1 ,$  $\forall \vx^\prime \in \tilde{\Omega}$)
and hence $\Go = \Go^\prime .$
Therefore, $\Go$ is minimal.
\end{proof}

\begin{corollary}
\label{coro:independent}
Suppose that $\Go$ is linearly independent.
Then $\Go$ is both minimal and maximal.
\end{corollary}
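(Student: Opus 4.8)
The plan is to obtain both properties from Proposition~\ref{prop:independent}. For minimality this is essentially immediate: a linearly independent family cannot contain the zero operator, so the set $\tilde{\Omega}_1 = \{\vx \in \tilde{\Omega} \mid \Go(\vx)\neq 0\}$ coincides with $\tilde{\Omega}$, and the restricted family $(\Go(\vx))_{\vx\in\tilde{\Omega}_1}$ is exactly the given linearly independent family. Proposition~\ref{prop:independent} then applies verbatim and shows that $\Go$ is a minimal joint observable, which settles one half of the claim.

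For maximality I would show that any joint observable dominating $\Go$ in the post-processing order is itself linearly independent, hence itself minimal, and then run minimality backwards. So let $\Go^\prime \in \joint(\{\Ao_\ell\}_{\ell=1}^n)$ be a joint observable with $\Go \ppeq \Go^\prime$, and pick a Markov kernel $p$ with $\Go = p\ast\Go^\prime$, that is $\Go(\vx)=\sum_{\vx^\prime\in\tilde{\Omega}} p(\vx,\vx^\prime)\Go^\prime(\vx^\prime)$. Every operator $\Go(\vx)$ then lies in the linear span of the family $(\Go^\prime(\vx^\prime))_{\vx^\prime\in\tilde{\Omega}}$, so the dimension of that span is at least the dimension of the span of $(\Go(\vx))_{\vx\in\tilde{\Omega}}$, which equals $|\tilde{\Omega}|$ by the linear independence of $\Go$. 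Since $\Go^\prime$ has exactly $|\tilde{\Omega}|$ operators (all joint observables of $\{\Ao_\ell\}_{\ell=1}^n$ share the outcome set $\tilde{\Omega}$), this forces $(\Go^\prime(\vx^\prime))_{\vx^\prime\in\tilde{\Omega}}$ to be linearly independent.

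It then remains to conclude. Applying Proposition~\ref{prop:independent} to $\Go^\prime$ shows that $\Go^\prime$ is a minimal joint observable, so by the definition of minimality the relation $\Go \ppeq \Go^\prime$ forces $\Go^\prime \ppeq \Go$. As $\Go^\prime$ was an arbitrary joint observable above $\Go$, this is precisely the statement that $\Go$ is maximal in $\joint(\{\Ao_\ell\}_{\ell=1}^n)$, completing the proof.

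The step I expect to be the crux, and the reason I route the argument through minimality rather than arguing directly, is the passage $\Go \ppeq \Go^\prime \Rightarrow \Go^\prime \ppeq \Go$. The tempting direct approach is to observe that $\Go$ and $\Go^\prime$ are two bases of the same subspace, so $p$ is invertible and $\Go^\prime = p^{-1}\ast\Go$; the difficulty is that $p^{-1}$ generally fails to be a Markov kernel, since a nonnegative matrix with a nonnegative inverse must be a (scaled) permutation and there is no reason $p$ has that form. The dimension count together with a second application of Proposition~\ref{prop:independent} sidesteps the need to control the sign pattern of $p^{-1}$ entirely.
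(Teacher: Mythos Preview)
Your proof is correct and follows essentially the same route as the paper: both obtain minimality directly from Proposition~\ref{prop:independent}, and for maximality both use a dimension count to show that any $\Go^\prime \in \joint(\{\Ao_\ell\}_{\ell=1}^n)$ with $\Go \ppeq \Go^\prime$ is again linearly independent, then appeal to Proposition~\ref{prop:independent} for $\Go^\prime$. The only cosmetic difference is that the paper invokes the \emph{proof} of Proposition~\ref{prop:independent} to conclude $\Go = \Go^\prime$ outright, whereas you invoke its \emph{statement} to get that $\Go^\prime$ is minimal and hence $\Go^\prime \ppeq \Go$; both yield maximality.
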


\begin{proof}
The minimality of $\Go$ is immediate from Proposition~\ref{prop:independent}.

Assume $\Go \ppeq \Go^\prime $
for some $\Go^\prime \in \joint (\{ \Ao_\ell \}_{\ell =1}^n) .$
Since each element of $\Go$ is a linear combination of $\Go^\prime ,$
a dimensional argument yields that 
$\Go^\prime$ is also linearly independent.
(Here we used the finiteness of the outcome set 
$\tilde{\Omega}$).
Hence from the proof of Proposition~\ref{prop:independent} we obtain 
$\Go = \Go^\prime .$
Therefore, $\Go$ is maximal.
\end{proof}

\section{Dichotomic qubit observables} \label{sec:dichotomic}

\subsection{Compatibility of dichotomic qubit observables}

We denote  $\vsigma =(\sigma_1,\sigma_2,\sigma_3)$.
A dichotomic qubit observable is of the form
$$
\Eaa(\pm) = \half ( \alpha \id \pm \va \cdot \vsigma ) \, , 
$$
where $\va\in\real^3$, $\alpha\in[0,2]$ and $\no{\va} \leq \min(\alpha,2-\alpha)$.

All joint observables of a compatible pair $(\Eaa,\Ebb)$ are parametrized by two parameters $\gamma\in\real$ and $\vg\in\real^3$ in the following way. 
The joint observable $\Gg$ is defined as
\begin{align*}
\Gg(+,+) &= \half ( \gamma \id + \vg \cdot \vsigma ) ,\\
\Gg(+,-) &= \Ea(+) - \Gg(+,+) \\
&= \half [ (\alpha - \gamma)\id +(\va- \vg ) \cdot \vsigma ] ,\\
\Gg(-,+) &= \Eb(+) - \Gg(+,+)  \\
& = \half [ (\beta - \gamma)\id +(\vb- \vg ) \cdot \vsigma ] \\
\Gg(-,-) &= \id + \Gg(+,+) - \Ea(+) - \Eb(+) 
\\
&=  \half [ (2 + \gamma - \alpha - \beta) \id +(\vg- \va -\vb ) \cdot \vsigma ].
\end{align*}
For $\Gg$ to be a valid joint observable, these four operators must be positive. 
This means that the parameters $\gamma,\vg$ have to satisfy the following inequalities \cite{StReHe08}:
\begin{align}
& \no{\vg} \leq \gamma \label{eq:g1}\\ 
& \no{\va - \vg} \leq \alpha-\gamma  \label{eq:g2}  \\
& \no{\vb - \vg} \leq \beta- \gamma \label{eq:g3} \\
& \no{\va+\vb-\vg} \leq 2 + \gamma - \alpha - \beta \label{eq:g4}
\end{align} 

From the previous inequalities one can solve the compatibility condition for $\Eaa$ and $\Ebb$, and three equivalent formulations are presented in \cite{StReHe08,BuSc10,YuLiLiOh10}.
For our purposes, we do not need the compatibility condition; we simply assume that $\Eaa$ and $\Ebb$ are compatible and in the following we analyze the condition for the minimality of joint observable.

\subsection{Minimal joint observables}

Now we give a complete characterization of the minimality of 
$\Gg$
when the marginal observables $\Eaa$ and $\Ebb$ are non-commutative,
which holds if and only if $\va$ and $\vb$ are linearly independent.
We put $\Ao_1 := \Eaa $ and $\Ao_2 := \Ebb $ and adopt the same notation as 
in Section~\ref{sec:characterization}.

\subsubsection{Trivial compatibility}

Two dichotomic observables $\Ao_1$ and $\Ao_2$ are trivially compatible if one of the orderings  $\Ao_1(+) \leq \Ao_2(+)$,  $\Ao_1(+) \geq \Ao_2(-)$,  $\Ao_1(+) \leq \Ao_2(-)$ or  $\Ao_1(+) \geq \Ao_2(-)$ holds \cite{BuHe08}.
Two compatible dichotomic observables satisfy this kind of trivial condition exactly when they have a joint observable with one element being zero.

Suppose that $\Gg(++)=0$, which is equivalent to $\Ao_1(+) \leq \Ao_2(-)$. 
In this case, the other elements of $\Gg$ are determined to be
\begin{align*}
\Gg(+ -) &= \Ao_1(+) \, , \\
\Gg(- +) & = \Ao_2(+) \, , \\
\Gg(- -)
&=  \id - \Ao_1(+)-\Ao_2(+) \, .
\end{align*}
Analogous equations follow in other cases when $\Gg(+-)=0$, $\Gg(-+)=0$ or $\Gg(--)=0$. 

We conclude that in the case of trivial compatibility, a joint measurement has three nonzero elements and it is unique.

\subsubsection{Linearly independent vectors}

Secondly, we consider the case when the vectors $\va , \vb ,\vg$ are linearly independent.
In this case, we can easily check that 
$\Gg$ is linearly independent.
Hence Corollary~\ref{coro:independent} implies that 
$\Gg$ is both maximal and minimal.

\subsubsection{Linearly dependent vectors}

Thirdly, we consider the case when $\vg$ can be written as a linear combination $\vg = c_1 \va + c_2 \vb$, $c_1 , c_2 \in \real$.
In order to calculate $K(\Ao_\ell , \Gg)$
and $C_\ell (\Gg)$, consider the following homogeneous linear (in)equalities for $u \in \real^{\tilde{\Omega}} :$
\begin{gather}
	\begin{pmatrix}
		\gamma & \alpha - \gamma & \beta - \gamma & 2 + \gamma - \alpha - \beta \\
		c_1 & 1-c_1 & -c_1 & c_1 -1  \\
		c_2 & -c_2 & 1-c_2 & c_2-1
	\end{pmatrix}
	\begin{pmatrix}
		u (++) \\ u (+-) \\ u (-+) \\ u(--)
	\end{pmatrix}
	=0,
	\label{eq:ueq}
	\\
	u (x , x^\prime)
	\begin{cases}
		\leq 0 & \text{if $x = + $} \\
		\geq 0 & \text{if $x = -  ,$}
	\end{cases}
	\label{eq:uineq1}
	\\
	u (x , x^\prime)
	\begin{cases}
		\leq 0 & \text{if $x^\prime = + $} \\
		\geq 0 & \text{if $x^\prime = -  .$}
	\end{cases}
	\label{eq:uineq2}
\end{gather}
Let $C_{1,+} $ be the polyhedral cone on $\real^{\tilde{\Omega}}$ 
defined by \eqref{eq:ueq} and \eqref{eq:uineq1},
and $C_{2,+}$ be the one defined by
\eqref{eq:ueq} and \eqref{eq:uineq2}.
Then it can be checked that $K(\Ao_\ell , \Gg)$
and $C_\ell (\Gg)$
can be given as follows.
\begin{itemize}
\item
For $p \in \real^{\Omega_\ell \times \tilde{\Omega}} ,$
$p \in K ( \Ao_\ell , \Gg  )$
if and only if
there exists  $u_+ \in C_{\ell , +}$ 
such that
$p (+ , \vx) = \delta_{+ , \pi_\ell (\vx)} + u_+ (\vx)$
and
$p (- , \vx) = \delta_{- , \pi_\ell (\vx)} - u_+ (\vx) \geq 0 $
$(\vx \in \tilde{\Omega}) .$
\item
For $v \in \real^{\Omega_\ell \times \tilde{\Omega}} , $
$v \in C_\ell (\Gg)$
if and only if there exists $u_+ \in C_{\ell , +}$
such that
$v (+ , \vx) = u_+ (\vx)$
and
$v (- , \vx) = - u_+ (\vx) $
$(\vx \in \tilde{\Omega}) .$
\end{itemize}
The general solution of \eqref{eq:ueq} is
$u = tw$
$(t \in \real),$
where $w \in \real^{\tilde{\Omega}}$ is given by
\begin{equation}\label{eq:w}
	\begin{pmatrix}
		w (++) \\ w(+-) \\ w(-+) \\ w(--)
	\end{pmatrix}
	:=
	\begin{pmatrix}
		(2-\alpha)c_1 + (2-\beta)c_2 + \gamma -2 \\
		(2-\alpha)c_1 -\beta c_2 + \gamma  \\
		-\alpha c_1 + (2-\beta)c_2 + \gamma \\
		- \alpha c_1  -\beta c_2 + \gamma
	\end{pmatrix} 
	.
\end{equation}
If $\Gg$ is pairwise linearly independent (see Appendix \ref{sec:pp2}),
then by Proposition~\ref{prop:min2}, 
$\Gg$ is a minimal joint observable if and only if
$C_1 (\Gg) = C_2 (\Gg) = \{ 0 \} ,$
or equivalently,
$C_{1 , + }  = C_{2 , +} = \{ 0\} .$
By noting $w \neq 0 ,$
$C_{1 , +} \neq \{ 0 \}$ if and only if
\begin{align*}
	&\left[
	w(++) \leq 0 
	\, \land \,
	w(+-) \leq 0
	\, \land \,
	w (-+) \geq 0
	\, \land \,
	w(--) \geq 0
	\right]
	\\
	\lor
	&\left[
	w(++) \geq 0 
	\, \land \,
	w(+-) \geq 0
	\, \land \,
	w (-+) \leq 0
	\, \land \,
	w(--) \leq 0 
	\right]
\end{align*}
Hence $C_{1,+} = \{ 0 \}$ if and only if
\begin{align*}
	&\left[
	w(++) > 0 
	\, \lor \,
	w(+-) > 0
	\, \lor \,
	w (-+) < 0
	\, \lor \,
	w(--) < 0
	\right]
	\\
	\land
	& \left[
	w(++) < 0 
	\, \lor \,
	w(+-) < 0
	\, \lor \,
	w (-+) > 0
	\, \lor \,
	w(--) > 0
	\right] .
\end{align*}
By using the equivalences
\begin{gather*}
	(x >0 \land y>0)  \, \lor \, (x < 0 \land y<0)
	\iff
	xy >0 ,
	\\
	(x >0 \land y < 0)  \, \lor \, (x < 0 \land y>0)
	\iff
	xy < 0 
\end{gather*}
valid for $x,y \in \real ,$
we obtain
\begin{align*}
	&C_{1,+} = \{ 0 \}
	\\
	\iff
	&
	(w (++) w(+-) <0)
	\, \lor \,
	(w (++) w(-+) > 0)
	\, \lor \,
	(w (++) w(--) >0)
	\\
	&\lor \,
	(w (+-) w(-+) >0)
	\, \lor \,
	(w (+-) w(--) >0)
	\, \lor \,
	(w (-+) w(--) <0)
\end{align*}
Similarly we obtain
\begin{align*}
	&C_{2,+} = \{ 0 \}
	\\
	\iff
	&
	(w (++) w(+-) >0)
	\, \lor \,
	(w (++) w(-+) < 0)
	\, \lor \,
	(w (++) w(--) >0)
	\\
	&\lor \,
	(w (+-) w(-+) >0)
	\, \lor \,
	(w (+-) w(--) <0)
	\, \lor \,
	(w (-+) w(--) >0) .
\end{align*}
Hence 
\begin{align}
	&C_{1,+} =  C_{2,+} = \{ 0 \} 
	\notag 
	\\ 
	\iff  &
	(w(++) w(--) > 0) 
	\, \lor \,
	(w(+-) w(-+) > 0)
	\notag 
	\\
	& \lor \, 
	\left\{
	\left[
	 (w(++) w(+-) <0)
	\, \lor \,
	(w(++) w(-+) >0)
	\, \lor \,
	(w(+-) w(--) >0)
	\right. \right.
	\notag 
	\\
	& \lor \,
	\left. 
	(w(-+) w(--) <0)
	\right]
	\, \land \,
	\left[
	(w(++) w(+-) >0)
	\, \lor \,
	(w(++) w(-+) <0)
	\right. 
	\notag 
	\\
	& 
	\left. \left.
	\lor \,
	(w(+-) w(--) <0)
	\, \lor \,
	(w(-+) w(--) >0)
	\right] \right\}
	\notag 
	\\
	\iff  &
	(w(++) w(--) > 0) 
	\, \lor \,
	(w(+-) w(-+) > 0)
	\notag 
	\\
	& \lor \,
	(w(++)^2 w(+-)  w(-+) >0 )
	\, \lor \,
	(w(++) w(+-)^2  w(--) >0 )
	\notag 
	\\
	& \lor \,
	( w (++)  w (+-) w (-+) w (--) <0 ) 
	\, \lor \,
	( w (++)   w (-+)^2  w (--)  >0 ) 
	\notag 
	\\
	& \lor \,
	(  w (+-) w (-+) w (--)^2 > 0 ) 
	\notag 
	\\
	\iff &
	(w(++) w(--) > 0) 
	\, \lor \,
	(w(+-) w(-+) > 0)
	\, \lor \,
	( w (++)  w (+-) w (-+) w (--) <0 ) 
	\notag 
	\\
	\iff&
	(w(++) w(--) > 0) 
	\, \lor \,
	(w(+-) w(-+) > 0)
	\label{eq:wmin}
\end{align}

Now we consider the case when $\Gg$ is not pairwise linearly independent.
The linear dependence conditions for the pairs of the elements of $\Gg$
are given as follows.
\begin{enumerate}[(i)]
\item
$\Gg (++)$ and $\Gg (+-)$
are linearly dependent
if and only if
\begin{equation}
	\vg = \frac{\gamma}{\alpha} \va .
	\label{eq:dep1}
\end{equation}
\item
$\Gg (++)$ and $\Gg (-+)$
are linearly dependent
if and only if
\begin{equation}
	\vg = \frac{\gamma}{\beta} \vb .
	\label{eq:dep2}
\end{equation}
\item
$\Gg (++)$ and $\Gg (--)$
are linearly dependent
if and only if
\begin{equation}
	\alpha + \beta \neq 2
	\, \land \,
	\vg = \frac{\gamma}{\alpha + \beta -2} (\va + \vb) .
	\label{eq:dep3}
\end{equation} 
\item
$\Gg (+-)$ and $\Gg (-+)$
are linearly dependent
if and only if
\begin{equation}
	\alpha \neq \beta
	\, \land \,
	\vg = 
	\frac{\gamma - \beta}{\alpha -\beta} \va 
	+ \frac{\gamma - \alpha}{\beta - \alpha} \vb .
	\label{eq:dep4}
\end{equation}
Here, the possibility of $\alpha = \beta $ will be excluded 
since this implies $\vg = \va = \vb ,$
contradicting the linear independence of $\va$ and $\vb .$
\item
$\Gg (+-)$ and $\Gg (--)$
are linearly dependent
if and only if
\begin{equation}
	\beta \neq 2
	\, \land \, 
	\vg = \va + \frac{\alpha - \gamma}{2- \beta} \vb .
	\label{eq:dep5}
\end{equation}
Here, the possibility of $\beta =2 $ will be excluded 
since this implies $  \va = \vg = \va + \vb ,$ 
contradicting $\vb \neq 0 .$
\item
$\Gg (-+)$ and $\Gg (--)$
are linearly dependent
if and only if
\begin{equation}
	\alpha \neq 2
	\, \land \,
	\vg = \frac{\beta - \gamma}{2- \alpha} \va + \vb .
	\label{eq:dep6}
\end{equation}
Here,
the possibility of $\alpha =2 $ will be excluded 
since this implies $  \vb = \vg = \va + \vb ,$ 
contradicting $\va \neq 0 .$
\end{enumerate}
As a conclusion, $\Gg$ is pairwise linearly independent if and only if neither of the above 
six conditions holds.
We remark that each of the above conditions corresponds to 
the intersection point of two lines 
$w (\vx) = 0$ and $w (\vx^\prime) =0 $
on $c_1$-$c_2$ plane.
For example, the condition~\eqref{eq:dep1} holds if and only if
$w (-+) = w(--) =0  .$

Now assume that $\Gg (++) $ and $\Gg (+-)$
are linearly dependent.
Then
$\vg = \frac{\gamma}{\alpha} \va ,$
and
$c_1 = \frac{\gamma}{\alpha} ,$ $c_2 =0 .$
Hence
\[
	w (++) = -2 \frac{  \alpha -  \gamma}{\alpha} , \,
	w (+-) = 2 \frac{\gamma }{\alpha} , \,
	w(-+) = w(--) = 0.
\]
Thus each $r_ \ell \in K (\Ao_\ell , \Gg)$ $(\ell =1,2)$
can be written as
\begin{gather*}
	\begin{pmatrix}
		r_1 (+, ++) \\ r_1 (+, +-) \\ r_1 (+, -+) \\ r_1 (+, --)
	\end{pmatrix}
	=
	\begin{pmatrix}
		1  \\ 1 \\ 0 \\ 0
	\end{pmatrix}
	, \quad
	\begin{pmatrix}
		r_1 (-, ++) \\ r_1 (-, +-) \\ r_1 (-, -+) \\ r_1 (-, --)
	\end{pmatrix}
	=
	\begin{pmatrix}
		0 \\ 0  \\ 1 \\ 1
	\end{pmatrix} 
	,
	\\
	\begin{pmatrix}
		r_2 (+, ++) \\ r_2 (+, +-) \\ r_2 (+, -+) \\ r_2 (+, --)
	\end{pmatrix}
	=
	\begin{pmatrix}
		1 - 2 s \frac{\alpha - \gamma }{\alpha}  \\ 2 s \frac{\gamma }{\alpha}  \\ 1 \\ 0
	\end{pmatrix}
	, \quad
	\begin{pmatrix}
		r_2 (-, ++) \\ r_2 (-, +-) \\ r_2 (-, -+) \\ r_2 (-, --)
	\end{pmatrix}
	=
	\begin{pmatrix}
		2 s \frac{\alpha - \gamma }{\alpha} \\ 1 -  2s  \frac{\gamma }{\alpha}  \\ 0 \\ 1
	\end{pmatrix}
\end{gather*}
for some $s \geq 0 .$
We can easily check the condition~\eqref{1:6} of Theorem~\ref{theo:min1},
which implies the minimality of $\Gg .$

We can similarly check the minimality of $\Gg$ 
when either $(\Gg (++) , \Gg (-+)) ,$ $(\Gg (+-) , \Gg (--)) ,$
or $(\Gg (-+) , \Gg (--))$ is a linearly dependent pair. 

Assume that $\Gg (++)$ and $\Gg (--)$ are linearly dependent.
Then 
$\alpha + \beta \neq 2 $
and $\vg = \frac{\gamma }{\alpha + \beta -2} (\va + \vb) .$
Thus 
\[
	w (++) = -2 \frac{2+ \gamma - \alpha - \beta}{2-\alpha - \beta} , \,
	w (-+) = w(+-) = 0 , \,
	w (--) = \frac{2\gamma}{2- \alpha - \beta } .
\]
Hence
each $r_\ell \in K (\Ao_\ell , \Gg)$
$(\ell =1,2)$
can be written as
\begin{gather*}
	\begin{pmatrix}
		r_1 (+, ++) \\ r_1 (+, +-) \\ r_1 (+, -+) \\ r_1 (+, --)
	\end{pmatrix}
	=
	\begin{pmatrix}
		1 - t (2+ \gamma - \alpha - \beta)  \\ 1 \\ 0 \\ t \gamma 
	\end{pmatrix}
	, \quad
	\begin{pmatrix}
		r_1 (-, ++) \\ r_1 (-, +-) \\ r_1 (-, -+) \\ r_1 (-, --)
	\end{pmatrix}
	=
	\begin{pmatrix}
		 t (2+ \gamma - \alpha - \beta) \\ 0 \\ 1 \\ 1 - t\gamma
	\end{pmatrix} 
	,
	\\
	\begin{pmatrix}
		r_2 (+, ++) \\ r_2 (+, +-) \\ r_2 (+, -+) \\ r_2 (+, --)
	\end{pmatrix}
	=
	\begin{pmatrix}
		1 - s(2+ \gamma - \alpha - \beta)  \\ 0 \\ 1 \\ s\gamma
	\end{pmatrix}
	, \quad
	\begin{pmatrix}
		r_2 (-, ++) \\ r_2 (-, +-) \\ r_2 (-, -+) \\ r_2 (-, --)
	\end{pmatrix}
	=
	\begin{pmatrix}
		 s (2+ \gamma - \alpha - \beta) \\ 1 \\ 0 \\ 1 -s\gamma
	\end{pmatrix}
\end{gather*}
for some $t, s \geq 0 .$
By taking sufficiently small $t , s > 0 ,$
we have
\begin{gather*}
	r_1 (+ , ++ ) r_1 (+ , +-) \neq 0 \neq r_2(- , ++) r_2 (-,+-) .  
\end{gather*}
Since $(\Gg (++) , \Gg (+-))$ is a linearly independent pair in this case,
Theorem~\ref{theo:min1} implies that $\Gg$ is not minimal.

Assume that $\Gg (+-)$ and $\Gg (-+)$ are linearly dependent.
Then $\alpha \neq \beta$ and 
$\vg = \frac{\gamma - \beta}{\alpha -\beta} \va + \frac{\gamma - \alpha}{\beta - \alpha} \vb .$
Thus
\[
	w(++) = w(--) =0 ,
	\,
	w(+-)
	=
	2 \frac{\beta - \gamma }{\beta - \alpha} ,
	\,
	w(-+)
	=
	2 \frac{\alpha - \gamma }{ \alpha - \beta} . 
\]
Hence
each $r_\ell \in K (\Ao_\ell , \Gg)$
$(\ell =1,2)$
can be written as
\begin{gather*}
	\begin{pmatrix}
		r_1 (+, ++) \\ r_1 (+, +-) \\ r_1 (+, -+) \\ r_1 (+, --)
	\end{pmatrix}
	=
	\begin{pmatrix}
		1  \\ 1 - t (\beta - \gamma ) \\  t (\alpha - \gamma ) \\ 0
	\end{pmatrix}
	, \quad
	\begin{pmatrix}
		r_1 (-, ++) \\ r_1 (-, +-) \\ r_1 (-, -+) \\ r_1 (-, --)
	\end{pmatrix}
	=
	\begin{pmatrix}
		 0 \\ t (\beta - \gamma) \\ 1 - t (\alpha - \gamma)  \\ 1 
	\end{pmatrix} 
	,
	\\
	\begin{pmatrix}
		r_2 (+, ++) \\ r_2 (+, +-) \\ r_2 (+, -+) \\ r_2 (+, --)
	\end{pmatrix}
	=
	\begin{pmatrix}
		1   \\ s (\beta - \gamma) \\ 1 - s( \alpha - \gamma) \\ 0
	\end{pmatrix}
	, \quad
	\begin{pmatrix}
		r_2 (-, ++) \\ r_2 (-, +-) \\ r_2 (-, -+) \\ r_2 (-, --)
	\end{pmatrix}
	=
	\begin{pmatrix}
		 0 \\ 1 - s (\beta - \gamma) \\ s (\alpha - \gamma) \\ 1  
	\end{pmatrix}
\end{gather*}
for some $t, s \geq 0 .$
By taking sufficiently small $t,s >0 ,$ we have
\[
	r_1 (+ , ++ ) r_1 (+ , +-) \neq 0 \neq r_2(+ , ++) r_2 (+,+-) .
\]
Since $\Gg (++)$ and $\Gg (+-)$ is linearly independent in this case,
Theorem~\ref{theo:min1} implies that $\Gg$ is not minimal.

We remark that the above two non-minimal cases \eqref{eq:dep3}
and \eqref{eq:dep4} are inconsistent with the condition~\eqref{eq:wmin}.

To summarize, we obtain the following proposition which 
completely characterizes the minimality $\Gg .$

\begin{theorem}
\label{theo:qubitmin}
Let $\Eaa$ and $\Ebb$ be compatible qubit observables.
Suppose that $\va$ and $\vb$ are linearly independent.
Then their joint observable $\Gg$ is a minimal joint observable if and only if 
either of the following conditions holds.
\begin{enumerate}[1.]
\item
$\vg,$ $\va ,$ and $\vb$ are linearly independent.
\item
One (and only one) element of $\Gg$ is zero.
\item
$\Gg$ satisfies either of the pairwise linear dependence conditions
\eqref{eq:dep1}, \eqref{eq:dep2}, \eqref{eq:dep5}, and \eqref{eq:dep6}.
\item
$\vg$ can be written as $\vg = c_1 \va + c_2 \vb $
$(c_1 , c_2 \in \real)$
and satisfies \eqref{eq:wmin}.
\end{enumerate}
\end{theorem}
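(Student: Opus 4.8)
The plan is to prove the equivalence by an exhaustive case analysis, organized according to the linear-algebraic relations that can hold among the vectors $\va$, $\vb$, $\vg$ and among the four operators $\Gg(\pm,\pm)$. Since $\va$ and $\vb$ are assumed linearly independent, every admissible $\Gg$ falls into exactly one of three mutually exclusive situations: (a) exactly one element of $\Gg$ vanishes; (b) all four elements are nonzero and $\vg,\va,\vb$ are linearly independent; (c) all four elements are nonzero and $\vg\in\mathrm{span}\{\va,\vb\}$. I would then show that the subcases yielding minimality assemble precisely into conditions~1--4 of the statement.

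The first two situations are immediate. For (a), the analysis of trivial compatibility shows that a joint observable with a vanishing entry has its three remaining entries completely determined by $\Ao_1$ and $\Ao_2$, so $\joint(\{\Ao_1,\Ao_2\})$ is a singleton and $\Gg$ is trivially minimal; this is condition~2. For (b), the four elements of $\Gg$ are linearly independent, so Corollary~\ref{coro:independent} applies and $\Gg$ is both minimal and maximal; this is condition~1. The substantive work is situation (c), where I write $\vg=c_1\va+c_2\vb$. If $\Gg$ is moreover pairwise linearly independent, Proposition~\ref{prop:min2} reduces minimality to $C_1(\Gg)=C_2(\Gg)=\{0\}$, which by the explicit description of the cones equals $C_{1,+}=C_{2,+}=\{0\}$. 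Since the solution space of~\eqref{eq:ueq} is the line $\real w$ with $w$ as in~\eqref{eq:w}, a Boolean reduction of the sign conditions on the four coordinates of $w$ collapses to~\eqref{eq:wmin}, giving condition~4.

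It remains to handle situation (c) when $\Gg$ fails to be pairwise linearly independent, which by the earlier discussion means one of the pairwise dependence relations~\eqref{eq:dep1}--\eqref{eq:dep6} holds. For each such relation two coordinates of $w$ vanish, so I can write down an arbitrary $r_\ell\in K(\Ao_\ell,\Gg)$ explicitly as a one-parameter family and then test criterion~\eqref{1:6} of Theorem~\ref{theo:min1}. For~\eqref{eq:dep1}, \eqref{eq:dep2}, \eqref{eq:dep5}, \eqref{eq:dep6} the test succeeds and $\Gg$ is minimal, yielding condition~3; for~\eqref{eq:dep3} and~\eqref{eq:dep4} I would exhibit, for arbitrarily small positive parameter, operators $r_\ell$ with $r_\ell(\cdot,\vx_1)r_\ell(\cdot,\vx_2)\neq 0$ for every $\ell$ while $\Gg(\vx_1),\Gg(\vx_2)$ are linearly independent, so that~\eqref{1:6} fails and $\Gg$ is not minimal.

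The hard part is not any single computation but the bookkeeping that confirms the four conditions sort every configuration correctly. The decisive subtlety is that condition~4, namely~\eqref{eq:wmin}, captures only the pairwise-independent minimal configurations: whenever $\Gg$ is pairwise dependent, two coordinates of $w$ vanish, so both products occurring in~\eqref{eq:wmin} vanish and~\eqref{eq:wmin} fails. Hence the minimal dependent cases~\eqref{eq:dep1}, \eqref{eq:dep2}, \eqref{eq:dep5}, \eqref{eq:dep6} genuinely require the separate clause~3, and the non-minimal dependent cases~\eqref{eq:dep3}, \eqref{eq:dep4}—where the positivity constraints~\eqref{eq:g1}--\eqref{eq:g4} force the surviving product in~\eqref{eq:wmin} to be nonpositive—correctly escape all four conditions. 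Checking that these are the only sources of minimality then shows that conditions~1--4 are jointly necessary and sufficient, completing the proof.
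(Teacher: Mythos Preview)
Your proposal follows the paper's approach essentially step for step: the same three-way case split, the same appeals to Corollary~\ref{coro:independent}, Proposition~\ref{prop:min2}, and criterion~\eqref{1:6} of Theorem~\ref{theo:min1}, and the same Boolean collapse of the sign conditions on $w$ down to~\eqref{eq:wmin}. Your closing bookkeeping paragraph is also correct and matches the paper's remark that each pairwise-dependence condition sits at the intersection of two lines $w(\vx)=0$, so that~\eqref{eq:wmin} automatically fails there.

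There is, however, one genuine error in situation~(a). From the fact that a joint observable with a vanishing entry has its remaining three entries determined by $\Ao_1$ and $\Ao_2$, you conclude that $\joint(\{\Ao_1,\Ao_2\})$ is a singleton. That inference is invalid: the ordering $\Ao_1(+)\leq\Ao_2(-)$ guarantees that the joint observable \emph{with} $\Gg(++)=0$ is unique, but it does not rule out other joint observables with all four entries nonzero, and in fact such observables generically exist (take e.g.\ $\alpha=\beta=\tfrac12$, $\va=(0.1,0,0)$, $\vb=(0,0.1,0)$, $\gamma=0.1$, $\vg=(0.05,0.05,0)$). The correct route to minimality here is to note that when one entry vanishes the three surviving operators $\Ao_1(+)$, $\Ao_2(+)$, $\id-\Ao_1(+)-\Ao_2(+)$ (or the analogous triple) are linearly independent, which follows immediately from the linear independence of $\va$ and $\vb$, and then invoke Proposition~\ref{prop:independent}. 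With that fix your argument is complete.
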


\subsection{Unbiased dichotomic qubit observables}

Finally, we consider the simple special case $\alpha = \beta= 1$.
These kind of observables are called \emph{unbiased}.
As shown in \cite{Busch86}, $\Ea$ and $\Eb$ are compatible if and only if 
\begin{equation}
\no{\va - \vb} + \no{\va + \vb} \leq 2 \, .
\end{equation}
If $\va$ and $\vb$ are nonzero different vectors, then we easily see that two unbiased observables $\Ea$ and $\Eb$ cannot be trivially compatible. 
In particular, if $\va$ and $\vb$ are linearly independent, then all elements of $\Gg$ must be nonzero.

\begin{figure}
\centering
\includegraphics[width=7.5cm]{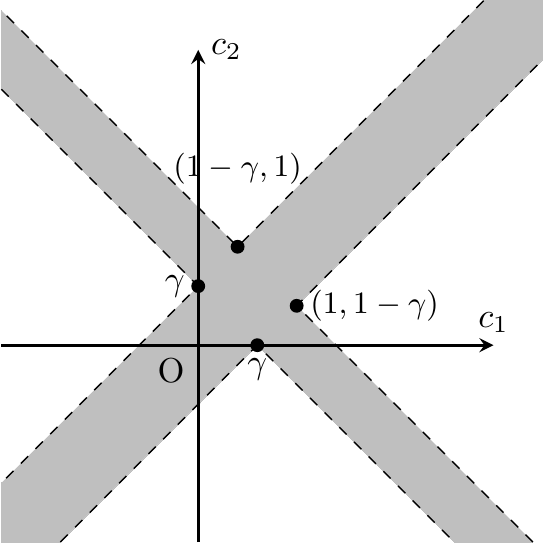}
\caption{The region of $(c_1 , c_2)$ such that $\Gg$ is a minimal joint observable
when $\vg = c_1 \va + c_2 \vb .$
Note that $\vg$ should also satisfy the positivity conditions 
\eqref{eq:g1}--\eqref{eq:g4}
of $\Gg$,
whose shape in the $c_1$-$c_2$ plane strongly depends on 
the values of the inner products
$\va \cdot \va ,$ $\vb \cdot \vb ,$
and $\va \cdot \vb .$
}
\label{fig:1}
\end{figure}

The notation \eqref{eq:w} now takes the form
\[
	\begin{pmatrix}
		w (++) \\ w(+-) \\ w(-+) \\ w(--)
	\end{pmatrix}
	=
	\begin{pmatrix}
		c_1 +  c_2 + \gamma -2 \\
		c_1 -  c_2 + \gamma  \\
		-  c_1 +  c_2 + \gamma \\
		-   c_1  - c_2 + \gamma
	\end{pmatrix} 
\]
and the condition \eqref{eq:wmin} reduces to
\begin{equation}
	( \gamma < c_1 + c_2 < 2 - \gamma )
	\, \lor \,
	( - \gamma < c_1 - c_2 < \gamma) .
	\label{eq:wmin2}
\end{equation}
We also note that $0 < \gamma < 1 $ in this case.
In fact, if $\gamma = 0 ,$ we have $\Gg (++) = \Gg (--) =0$
which contradicts the linear independence of $\va$ and $\vb$.
If $\gamma =1$, we have $\Gg (+-) = \Gg (-+) = 0$,
which again contradicts the linear independence of $\va$ and $\vb .$
Thus we obtain the following corollary.

\begin{corollary}\label{coro:qubitmin}
Let $\Ea$ and $\Eb$ be compatible qubit observables and
suppose that $\va$ and $\vb$ are linearly independent.
Then their joint observable $\Gg$ is a minimal joint observable if and only if 
one of the following conditions holds.
\begin{enumerate}[1.]
\item \label{2:1}
$\vg,$ $\va ,$ and $\vb$ are linearly independent.
\item \label{2:2}
$\Gg$ satisfies either of the pairwise linear dependence conditions
\eqref{eq:dep1}, \eqref{eq:dep2}, \eqref{eq:dep5}, and \eqref{eq:dep6}.
\item \label{2:3}
$\vg$ can be written as $\vg = c_1 \va + c_2 \vb $
$(c_1 , c_2 \in \real)$
and satisfies \eqref{eq:wmin2}.
\end{enumerate}
\end{corollary}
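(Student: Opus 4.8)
The plan is to obtain Corollary~\ref{coro:qubitmin} as the specialization of Theorem~\ref{theo:qubitmin} to the unbiased case $\alpha = \beta = 1$, showing that case~2 of the theorem becomes vacuous under the present hypotheses while cases~1, 3, and~4 collapse onto conditions~\ref{2:1}, \ref{2:2}, and~\ref{2:3} of the corollary. Condition~1 of the theorem and condition~\ref{2:1} of the corollary coincide verbatim, and condition~3 of the theorem (the surviving pairwise-dependence loci \eqref{eq:dep1}, \eqref{eq:dep2}, \eqref{eq:dep5}, \eqref{eq:dep6}) is condition~\ref{2:2} unchanged. The remaining work is therefore twofold: (a) to argue that case~2 of the theorem — one element of $\Gg$ vanishing — cannot occur, so that it may be dropped; and (b) to verify that case~4 of the theorem, namely $\vg = c_1 \va + c_2 \vb$ together with \eqref{eq:wmin}, reduces to condition~\ref{2:3} with \eqref{eq:wmin} replaced by \eqref{eq:wmin2}.

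For step~(a) I would recall that a joint observable $\Gg$ has a zero element exactly when $\Eaa$ and $\Ebb$ are trivially compatible, i.e. when one of the orderings relating $\Ao_1(+)$ to $\Ao_2(\pm)$ holds. In the unbiased case each such ordering amounts to the positivity of a traceless qubit operator of the form $\pm(\va - \vb) \cdot \vsigma$ or $\pm(\va + \vb) \cdot \vsigma$; since a traceless Hermitian operator $\vn \cdot \vsigma$ is positive only when $\vn = 0$, each ordering forces $\va = \vb$ or $\va = -\vb$. Both contradict the linear independence of $\va$ and $\vb$, so no element of $\Gg$ vanishes and case~2 of the theorem is vacuous.

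For step~(b) I would substitute $\alpha = \beta = 1$ into \eqref{eq:w}, obtaining $w(++) = c_1 + c_2 + \gamma - 2$, $w(+-) = c_1 - c_2 + \gamma$, $w(-+) = -c_1 + c_2 + \gamma$, and $w(--) = -c_1 - c_2 + \gamma$. Writing $s = c_1 + c_2$ and $d = c_1 - c_2$ gives $w(++) w(--) = (s + \gamma - 2)(\gamma - s)$ and $w(+-) w(-+) = \gamma^2 - d^2$, so that \eqref{eq:wmin} reads $(\gamma < s < 2 - \gamma) \lor (-\gamma < d < \gamma)$, which is exactly \eqref{eq:wmin2}. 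This reduction relies on the strict bounds $0 < \gamma < 1$, which I would establish separately: if $\gamma = 0$ then \eqref{eq:g1} and \eqref{eq:g4} force $\vg = 0$ and $\vg = \va + \vb$, hence $\va + \vb = 0$; and if $\gamma = 1$ then \eqref{eq:g2} and \eqref{eq:g3} force $\vg = \va$ and $\vg = \vb$, hence $\va = \vb$ — both contradicting linear independence.

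The main obstacle is the bookkeeping rather than any single hard estimate: in dropping case~2 and fusing the factorization into \eqref{eq:wmin2} I must ensure that no boundary configuration is misclassified. The delicate points are that the simultaneous vanishing of a pair of $w$-entries corresponds precisely to the pairwise-dependence loci \eqref{eq:dep1}--\eqref{eq:dep6}, that the non-minimal loci \eqref{eq:dep3} and \eqref{eq:dep4} remain excluded by \eqref{eq:wmin2} (as already noted for \eqref{eq:wmin}), and that the inequalities in \eqref{eq:wmin2} are the correct strict ones once $0 < \gamma < 1$ is in force. Verifying these consistency conditions, all of which are routine sign analyses in the $c_1$--$c_2$ plane, completes the reduction.
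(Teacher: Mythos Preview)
Your proposal is correct and follows essentially the same approach as the paper: specialize Theorem~\ref{theo:qubitmin} to $\alpha=\beta=1$, eliminate the trivial-compatibility case by showing it forces $\va=\pm\vb$, and reduce \eqref{eq:wmin} to \eqref{eq:wmin2} via the explicit factorization of $w(++)w(--)$ and $w(+-)w(-+)$, using the positivity constraints \eqref{eq:g1}--\eqref{eq:g4} to obtain $0<\gamma<1$. Your argument is in fact slightly more explicit than the paper's in places (e.g.\ the traceless-operator reasoning for step~(a)).
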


For a fixed $\gamma \in (0,1) ,$
the region
for $\vg = c_1\va  + c_2 \vb$
corresponding to the conditions \ref{2:2} and \ref{2:3} of Corollary~\ref{coro:qubitmin}
is depicted in Fig.~\ref{fig:1}.

\section*{Acknowledgements}

T.H.\ acknowledges financial support from the Academy of Finland (Project No.\ 287750) and from the Academy of Finland Centre of Excellence program (Project No.\ 312058). 
Y.K.\ acknowledges financial support by the National Natural Science Foundation of China (Grants No.\ 11374375 and No.\ 11574405).
\newpage

\newpage

\appendix 

\section{Order theoretic definitions}\label{sec:order}

Let $(X , \leq)$ be a partially ordered set (poset).
Any subset $Y$ of $X$ is also a poset.

A poset $X$ is \emph{totally ordered} if for any $x ,y \in X ,$ either $x \leq y$ or $y \leq x$ holds.
A totally ordered subset $C$ of $X$ is called a \emph{chain}.

\begin{itemize}
\item 
$X$ is \emph{upper directed} (or just \emph{directed})
$:\defarrow$ for any $x , y \in X $ there exists $z \in X$
such that $x \leq z$ and $y \leq z .$
\item 
$X$ is \emph{lower directed}
$:\defarrow$
for any $x , y \in X $ there exists $z \in X$
such that $x \geq z$ and $y \geq z .$
\item 
$X$ is \emph{upper} (resp.\ \emph{lower}) \emph{inductive}
$:\defarrow$
any chain of $X$ has an upper bound (resp.\ a lower bound).
\item
$X$ is \emph{upper} (resp.\ \emph{lower}) \emph{directed complete}
$:\defarrow$
every upper (resp.\ lower) directed subset of $X$
has a supremum (resp.\ an infinimum).
\end{itemize}
It is immediate from the above definition 
that an upper (resp.\ a lower) directed complete poset is upper (resp.\ lower) inductive.
According to Zorn's lemma, any upper (resp.\ lower) inductive poset has 
a maximal (resp.\ minimal) element.

Let $(I , \leq )$ be a directed set and let $X$ be a set.
A map $I \ni i \mapsto x_i \in X$ is called a net on $X .$
If $X$ is a poset and for any $i ,j \in I  ,$
$i \leq j$ implies $x_i \leq x_j$
(resp.\ $x_j \leq x_i$),
the net $(x_i)_{i \in I}$ is said to be monotonically increasing
(resp.\ decreasing).
A poset $X$ is upper (resp.\ lower) directed complete if and only if
the image of any monotonically increasing (resp.\ decreasing) 
net on $X$ has a supremum (resp.\ an infinimum).

\section{Pairwise linearly independent observables}\label{sec:pp2}

An observable $\Ao$ with an outcome set $\Omega$ is said to be \emph{pairwise linearly independent}
if any pair $(\Ao (x_1) , \Ao(x_2))$, $x_1 , x_2 \in \Omega ; \, x_1 \neq x_2$,
is linearly independent.
Every observable is post-processing equivalent to a pairwise linearly independent 
observable unique up to the permutation of the outcome set \cite{Kuramochi15b}.
An observable $\Ao$ is pairwise linearly independent if and only if $\Ao$ is minimal sufficient,
that is, for any Markov kernel $p \in \markov{\Omega}{\Omega}$ the condition $p \ast \Ao = \Ao$ implies 
$p (x, x^\prime) = \delta_{x , x^\prime} $ \cite{Kuramochi15b}.
We recall that two Markov kernels $p\in\markov{\Omega_1}{\Omega_2}$ and $q\in\markov{\Omega_2}{\Omega_3}$ can be combined into a new Markov kernel as follows:
\[
	(p \ast q) (x , x^\prime)
	:=
	\sum_{y \in \Omega_2}
	p (x , y) q(y , x^\prime) .
\]

The following proposition characterizes a Markov kernel 
that conserves the information of the post-processed observable.

\begin{proposition}\label{prop:mk}
Let $\Ao$ be a pairwise linearly independent observable with an outcome set $\Omega$
and let $p \in \markov{\Omega^\prime}{\Omega}$ be a Markov kernel
from $\Omega$ to a finite set $\Omega^\prime .$
Then the following conditions are equivalent.
\begin{enumerate}[(i)]
\item
$p\ast \Ao \ppsim \Ao .$
\item
For each $y \in \Omega^\prime$, the finite set
$\{p(y,x):x \in \Omega \}$ has at most one nonzero element.
\end{enumerate}
\end{proposition}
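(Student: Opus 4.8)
The plan is to observe first that $p \ast \Ao \ppeq \Ao$ holds by the very definition of post-processing, so condition~(i) is equivalent to the single relation $\Ao \ppeq p \ast \Ao$. In both directions the real content is therefore whether $\Ao$ can be recovered from $\Bo := p \ast \Ao$, and the pairwise linear independence of $\Ao$ will be used through its equivalent formulation as minimal sufficiency (Appendix~\ref{sec:pp2}).

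For (i)$\Rightarrow$(ii) I would start from $\Ao \ppeq \Bo$ and choose a Markov kernel $q \in \markov{\Omega}{\Omega^\prime}$ with $\Ao = q \ast \Bo = q \ast (p \ast \Ao)$. Associativity of $\ast$ then gives $(q \ast p) \ast \Ao = \Ao$, where $q \ast p \in \markov{\Omega}{\Omega}$ is the composite kernel $(q \ast p)(x,x^\prime) = \sum_{y \in \Omega^\prime} q(x,y)\, p(y,x^\prime)$. Since $\Ao$ is minimal sufficient, $q \ast p = \id$, that is $\sum_y q(x,y)\, p(y,x^\prime) = \delta_{x,x^\prime}$. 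As all summands are nonnegative, the off-diagonal equations force $q(x,y)\, p(y,x^\prime) = 0$ whenever $x \neq x^\prime$. Now fix $y \in \Omega^\prime$; by the normalization $\sum_{x} q(x,y) = 1$ there is some $x_0$ with $q(x_0,y) > 0$, and then $p(y,x^\prime) = 0$ for every $x^\prime \neq x_0$. Hence the row $p(y,\cdot)$ has at most one nonzero entry, which is precisely~(ii).

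For (ii)$\Rightarrow$(i) I would build an explicit recovery kernel. Assuming each row of $p$ is supported on at most one point, write $x_y$ for the unique index (when it exists) with $p(y,x_y) \neq 0$, so that $\Bo(y) = p(y,x_y)\Ao(x_y)$. The normalization $\sum_{y} p(y,x) = 1$ shows that the nonzero weights in column $x$ all sit in rows $y$ with $x_y = x$, whence $\sum_{y : x_y = x} p(y,x) = 1$; in particular every $x \in \Omega$ equals $x_y$ for at least one $y$. Define $q \in \markov{\Omega}{\Omega^\prime}$ by $q(x,y) = 1$ when $x_y = x$, and on the all-zero rows of $p$ by assigning the unit mass to some fixed outcome. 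Then $q \ast \Bo(x) = \sum_{y : x_y = x} p(y,x)\,\Ao(x) = \Ao(x)$, so $\Ao = q \ast \Bo \ppeq \Bo$, giving~(i).

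The associativity computation and the normalization identities are routine; the only point demanding care is the bookkeeping in (ii)$\Rightarrow$(i), namely verifying that $q$ is a genuine Markov kernel once the all-zero rows are handled, and that the identity $\sum_{y : x_y = x} p(y,x) = 1$ indeed guarantees each outcome of $\Ao$ is produced so that no information is lost. The minimal-sufficiency characterization from Appendix~\ref{sec:pp2} is the key ingredient that makes the direction (i)$\Rightarrow$(ii) essentially immediate.
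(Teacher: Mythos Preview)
Your proof is correct and follows essentially the same route as the paper's. For (i)$\Rightarrow$(ii) both arguments use minimal sufficiency of $\Ao$ to force $q\ast p=\id$ and then read off the single-support condition from nonnegativity; for (ii)$\Rightarrow$(i) both build the recovery kernel $q(x,y)=\delta_{x,x_y}$ and use the column normalization $\sum_y p(y,x)=1$ to recover $\Ao$, the only cosmetic difference being that the paper first restricts to $\Omega_1'=\{y:(p\ast\Ao)(y)\neq 0\}$ while you keep the full $\Omega'$ and dispose of the all-zero rows by an arbitrary assignment.
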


\begin{proof}
(i)$\implies$(ii).
Assume (i).
By assumption there exists a Markov kernel
$q \in \markov{\Omega}{\Omega^\prime}$ such that $(q\ast p) \ast \Ao =q \ast (p \ast \Ao) = \Ao$.
From the minimal sufficiency of $\Ao$ follows that $(q \ast p) (x , x^\prime) = \delta_{x , x^\prime}$. 
Now, assume that  there exist $y_0 \in \Omega^\prime $
and $x_1 , x_2 \in \Omega$ with $x_1 \neq x_2$
such that $p(y_0 ,x_1) p(y_0 ,x_2) \neq 0 .$
We can take $x_0 \in \Omega$ such that $q(x_0 , y_0) \neq 0.$
Then for $i =1,2 ,$
\begin{align*}
	\delta_{x_0 , x_i}
	&=
	\sum_{y \in \Omega^\prime}
	q(x_0 , y) p(y , x_i)
	\\
	&\geq
	q(x_0 , y_0) p(y_0 , x_i)
	\\
	&>0 .
\end{align*}
This implies $x_1 = x_0 =x_2 ,$
contradicting the assumption $x_1 \neq x_2 .$
Thus the condition~(ii) holds.

(ii)$\implies$(i).
Assume (ii).
Let 
$\Omega^\prime_1 := \left\{  y \in \Omega^\prime \mid (p \ast \Ao) (y) \neq 0 \right\} $
and let $\Bo$ be an observable with the outcome set $\Omega^\prime_1$, obtained by restricting  
$p \ast \Ao$ to $\Omega^\prime_1$.
Obviously, $\Bo$ is post-processing equivalent to $p \ast \Ao .$
By the assumption, for each $y \in \Omega^\prime_1$ there exists a unique element $x_y \in \Omega$ such that $p(y , x_y) \neq 0 .$
Then $\Bo (y) = p (y , x_y) \Ao (x_y) $ and hence
\[
	\Ao (x)
	=
	\sum_{y \in \Omega^\prime_1}
	\delta_{x , x_y} p (y , x_y) \Ao(x_y)
	=
	\sum_{y \in \Omega^\prime_1}
	\delta_{x , x_y} \Bo(y) ,
\]
which implies 
$\Ao \ppsim \Bo \ppsim p \ast \Ao .$
\end{proof}

\begin{corollary}
\label{coro:mk}
Let $\Ao$ be an observable with an outcome set $\Omega$ and let $p \in \markov{\Omega^\prime}{\Omega}$ be a Markov kernel
from $\Omega$ to a finite set $\Omega^\prime .$
Then $ p \ast \Ao \ppsim \Ao $ if and only if
for each $y \in \Omega^\prime $ and each $x_1 , x_2 \in \Omega ,$
if
$\Ao (x_1)$ and $\Ao (x_2)$ are linearly independent
then
$p(y , x_1) p (y ,x_2) = 0 .$
\end{corollary}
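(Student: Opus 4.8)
The plan is to reduce the statement to the already-established Proposition~\ref{prop:mk}, which treats the pairwise linearly independent case, by replacing $\Ao$ with a post-processing equivalent pairwise linearly independent observable. First I would pass to such a representative. On the set $\Omega_+ := \{x\in\Omega\mid \Ao(x)\neq 0\}$, the relation ``$\Ao(x)$ and $\Ao(x^\prime)$ are proportional'' is an equivalence relation, so I partition $\Omega_+$ into proportionality classes $C_1,\dots,C_k$ and define $\Ao^\prime(j):=\sum_{x\in C_j}\Ao(x)$ on the index set $J:=\{1,\dots,k\}$. Since two nonzero effects from different classes are non-proportional, hence linearly independent, $\Ao^\prime$ is pairwise linearly independent, and, as recalled in Appendix~\ref{sec:pp2} following \cite{Kuramochi15b}, it is post-processing equivalent to $\Ao$. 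Concretely, the equivalence is witnessed by two explicit Markov kernels: the merging kernel $m\in\markov{J}{\Omega}$ with $m(j,x)=1$ when $x\in C_j$ and $0$ for the other classes, giving $\Ao^\prime = m\ast\Ao$; and the splitting kernel $\nu\in\markov{\Omega}{J}$ with $\nu(x,j)=\lambda_x$ when $x\in C_j$ and $0$ otherwise, where $\lambda_x\in(0,1]$ is fixed by $\Ao(x)=\lambda_x\,\Ao^\prime(j)$ for $x\in C_j$ (so $\sum_{x\in C_j}\lambda_x=1$), and $\nu(x,\cdot)=0$ for the zero effects. One checks that $\nu$ is a genuine Markov kernel (the normalization $\sum_{x\in\Omega}\nu(x,j)=\sum_{x\in C_j}\lambda_x=1$ is over the target index) and that $\Ao=\nu\ast\Ao^\prime$ exactly, so that $\Ao\ppsim\Ao^\prime$.

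Next, using associativity of $\ast$, I would rewrite $p\ast\Ao = p\ast(\nu\ast\Ao^\prime)=\hat p\ast\Ao^\prime$, where $\hat p:=p\ast\nu\in\markov{\Omega^\prime}{J}$ is the Markov kernel $\hat p(y,j)=\sum_{x\in C_j}p(y,x)\lambda_x$. Because $\Ao\ppsim\Ao^\prime$, the relation $p\ast\Ao\ppsim\Ao$ holds if and only if $\hat p\ast\Ao^\prime\ppsim\Ao^\prime$. Applying Proposition~\ref{prop:mk} to the pairwise linearly independent observable $\Ao^\prime$ and the kernel $\hat p$, this is in turn equivalent to the condition that for every $y\in\Omega^\prime$ the finite set $\{\hat p(y,j)\mid j\in J\}$ has at most one nonzero element.

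It then remains to translate this last condition back into the asserted statement about $p$ and $\Ao$. Since each $\lambda_x$ with $x\in C_j$ is strictly positive and the summands are nonnegative, $\hat p(y,j)>0$ holds exactly when $p(y,x)>0$ for some $x\in C_j$; hence ``$\hat p(y,\cdot)$ has at most one nonzero entry'' is equivalent to ``all nonzero effects $\Ao(x)$ with $p(y,x)>0$ belong to a single proportionality class.'' The step I expect to need the most care is the elementary but essential observation that a family of effects is pairwise linearly dependent precisely when its nonzero members are mutually proportional, i.e.\ lie in one class (a zero effect being linearly dependent with everything). Granting this, ``all such effects lie in one class'' is exactly the requirement that $p(y,x_1)p(y,x_2)=0$ whenever $\Ao(x_1)$ and $\Ao(x_2)$ are linearly independent, which closes the chain of equivalences. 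Throughout, the two bookkeeping points to keep straight are the normalization convention of the Markov kernels (the sum over the target index equals one), which must be verified for $m$, $\nu$, and $\hat p$, and the harmless role of the zero effects, which contribute nothing to $p\ast\Ao$ and never enter the linear-independence hypothesis.
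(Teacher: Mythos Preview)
Your proposal is correct and follows essentially the same route as the paper: both reduce to Proposition~\ref{prop:mk} by replacing $\Ao$ with a pairwise linearly independent representative, composing the given kernel $p$ with the splitting kernel realizing $\Ao$ from that representative, and then translating the ``at most one nonzero entry per row'' condition back through the proportionality classes. The only cosmetic difference is that the paper simply invokes the existence of the representative and the associated kernel $q$ (your $\nu$), whereas you build them explicitly via the partition into proportionality classes; just remember to patch $m(\cdot,x)$ arbitrarily for $x\notin\Omega_+$ so that $m$ is a genuine Markov kernel, as you already flagged in your final remarks.
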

\begin{proof}
We take a pairwise linearly independent observable 
$\Bo \colon \Omega_0 \to \lh$ 
post-processing equivalent to $\Ao .$
Then there exist
$q \in \markov{\Omega}{\Omega_0}$
and $y \colon \Omega \to \Omega_0 $
such that
$\Ao (x) = q (x  , z (x)) \Bo (z(x))  $ $(x \in \Omega) ,$
$q(x , z^\prime) = 0$ if $z(x) \neq z^\prime ,$
and $\Ao = q \ast \Bo .$
Hence
$ p \ast \Ao = (p \ast q ) \ast \Bo ,$
and 
\begin{equation}
	p \ast q (y , z) = \sum_{x \in \Omega} p (y , x) q (x, z) 
	= \sum_{x \in \Omega , z(x) = z} p(y,x) q(x,z)
	\label{eq:pastq}
\end{equation}

Now assume $\Ao \ppsim p \ast \Ao .$
Then $\Bo \ppsim (p\ast q) \ast \Bo$
and Proposition~\ref{prop:mk} implies that for each $y \in \Omega^\prime$
$p \ast q (y,z) \neq 0 $
holds at most one $z \in \Omega_0 .$
If $\Ao(x_1)$ and $\Ao(x_2)$ are linearly independent, 
then
$z(x_1) \neq z(x_2) $
and $q(x_1 , z(x_1)) \neq 0 \neq q(x_2 , z(x_2)) .$
Hence \eqref{eq:pastq} implies $p(y,x_1) p(y,x_2) = 0.$
Conversely, if $\Ao$ is not post-processing equivalent to $p \ast \Ao ,$
then from $p \ast \Ao \ppsim (p \ast q) \ast \Bo $
and Proposition~\ref{prop:mk}, there exist $y \in \Omega^\prime$
and $z_1 , z_2 \in \Omega_0$
such that
$p\ast q (y,z_1) p\ast q (y,z_2) \neq 0$
and $z_1 \neq z_2 .$
From \eqref{eq:pastq} we can take $x_1 , x_2 \in \Omega$
such that
$z(x_1) = z_1 ,$
$z(x_2) = z_2 ,$
and
$p(y , x_1) q(x_1 , z_1) p (y,x_2) q(x_2,z_2) \neq 0.$
Then $\Ao (x_1)$ and $\Ao(x_2)$ are linearly independent
and $p(y,x_1) p(y,x_2) \neq 0 .$ 
\end{proof}

\section{Finite-dimensional polyhedra}\label{sec:polytope}

This appendix briefly describes some facts about
the finite-dimensional polyhedra used in the main part.

A convex set $K$ on $\real^n$ is called
a \emph{polyhedron} if there exist $m \geq 1 ,$
$\va^i \in \real^n ,$ and $\alpha^i \in \real$
$(i = 1 ,\dots , m)$
such that
\[
	K =
	\{
	\vx \in \real^n 
	\mid 
	 \va^i \cdot \vx \geq \alpha^i \, 
	(\forall  i \in \{  1, \dots , m\})
	\} ,
\]
i.e.\
$K$ is the set of solutions of a finite number of linear inequalities.
If a polyhedron $K$ is compact, $K$ is called a \emph{polytope}.
The set of extreme points $\mathrm{ex} (K)$ of the polytope $K$
is a finite set and can be calculated as follows.
Consider a subset $I\subseteq \{ 1 , \dots , m \}$
such that the system of linear equalities
\begin{equation}
	\va^i \cdot \vx = \alpha^i
	\quad 
	(i \in I)
	\label{eq:axa}
\end{equation}
has a unique solution.
Let $\mathcal{I}$ be the set of such subsets of $\{ 1 , \cdots , m\}$
and let $\vx_I$ be the unique solution of \eqref{eq:axa}
for each $I \in \mathcal{I} .$
Then $\mathrm{ex}(K)$ is given by
\[
	\mathrm{ex} (K)
	=
	\{
	\vx_I \mid
	I \in \mathcal{I} ,\, \vx_I \in K
	\} .
\]
(\cite{FDCO01}, Proposition~3.3.1).

A subset $C \subseteq \real^n$ is called a (pointed) \emph{polyhedral cone}
if there exist $m \geq 1 $ and
$\va^i \in \real^n$
$(i = 1 ,\dots , m)$
such that
\[
	C =
	\{
	\vx \in \real^n 
	\mid 
	 \va^i \cdot \vx \geq 0, 
	(\forall  i \in \{  1, \dots , m\})
	\} ,
\]
i.e.\
$C$ is the set of solutions of a finite number of \emph{homogeneous} linear inequalities.
For simplicity we assume that the linear span of $\{ \va^i\}_{i=1}^m$
coincides with the full space $\real^n .$
Then if $C \neq \{ 0\} ,$ 
$C$ is a conical hull of some finite set $A$ and $A$ can be calculated as follows.
Consider a subset $J \subseteq \{ 1 , \dots , m \}$
such that the rank of the linear equalities
\begin{equation}
	\va^i \cdot \vx = 0
	\quad 
	(i \in J)
	\label{eq:axac}
\end{equation}
is $n-1 $ and let $\mathcal{J}$ denote the set of such subsets.
For each $J \in \mathcal{J}$ we denote by $\Delta_J$
the line consisting of the solutions of \eqref{eq:axac}.
Then either $C\cap \Delta_J = \{ 0\}$
or 
$C\cap \Delta_J$ is a $1$-dimensional face of $C$
(\cite{FDCO01}, Proposition~3.3.2).
Let $\mathcal{J}^\prime := \{ J \in \mathcal{J} \mid   C \cap \Delta_J \neq \{0\}  \} .$
For each $J \in \mathcal{J}^\prime$ we take 
$\vy_J \in (C \cap \Delta_J) \setminus \{ 0 \}  .$
Then $A$ can be given by
$A = \{  \vy_J \}_{J \in \mathcal{J}^\prime} .$
If $\mathcal{J}^\prime = \emptyset ,$
we have $C = \{0\} .$

\end{document}